\documentclass[lettersize,journal]{IEEEtran}

\usepackage{amsmath,amsfonts}
\usepackage{algorithm}
\usepackage{algorithmic}
\usepackage{array}
\usepackage[caption=false,font=normalsize,labelfont=sf,textfont=sf]{subfig}
\usepackage{textcomp}
\usepackage{stfloats}
\usepackage{url}
\usepackage{verbatim}
\usepackage{graphicx}
\usepackage{cite}
\usepackage{amssymb}
\usepackage{amsthm}
\newtheorem{theorem}{Theorem}
\newtheorem{lemma}{Lemma}

\newtheorem{corollary}{Corollary}
\newtheorem{assumption}{Assumption}
\theoremstyle{definition}
\newtheorem{definition}{Definition}
\newtheorem{remark}{Remark}

\DeclareMathOperator*{\argmax}{arg\,max}

\DeclareMathOperator*{\subjectto}{subject\,to}
\usepackage{xcolor}

\usepackage{color}

\begin{document}

\title{Truth Revelation, Information Hiding, or Misinformation: Characterization of Equilibrium Outcomes in Signaling Games}

\author{Ertan~Kaz{\i}kl{\i}, Sinan~Gezici, and~Serdar~Y{\"u}ksel
\thanks{E. Kaz{\i}kl{\i} is with the Department of Electrical and Electronics Engineering, TOBB University of Economics and Technology, 06560, Ankara, Turkey. Email: ekazikli@etu.edu.tr.\\
\indent S. Gezici is with the Department of Electrical and Electronics Engineering, Bilkent University, 06800, Ankara, Turkey, Email: gezici@ee.bilkent.edu.tr.\\
\indent S. Y\"uksel is with the Department of Mathematics and Statistics, Queen's University, K7L 3N6, Kingston, Ontario, Canada. Email: yuksel@mast.queensu.ca.
}
}



\maketitle

\begin{abstract}
In signaling games where a sender and a receiver have misaligned criteria, equilibrium behavior may lead to fully revealing, quantized, or randomized policies. Notably, the first arises in statistical decision theory and classical communication theoretic problems involving a fully aligned sensor and receiver, the second arises in Nash theoretic simultaneous signaling games, and the last may appear in Stackelberg type (leader-follower) Bayesian signaling games. In this paper, we investigate the Bayesian persuasion problem involving a receiver that tries to estimate the source. We show that for certain payoff structures, the equilibrium solution is such that a source observation is mapped to distinct messages with nonzero probabilities. More specifically, we completely characterize conditions under which the sender requires randomization for the Bayesian persuasion problem involving general sources with finite cardinality. In particular, regardless of whether the equilibrium solution under a deterministic policy restriction is fully revealing, quantized or noninformative, there exists a randomized sender policy that improves the sender's payoff under certain conditions characterized in the paper. Moreover, we provide an algorithmic procedure to obtain the Bayesian persuasion solution, where the algorithm compares the payoffs with finitely many posterior probability combinations. We also consider fully aligned and completely misaligned payoff structures, where the solutions respectively involve a fully revealing sender and a noninformative sender. Then, we unify these results by proving that if the sender's expected payoff with respect to posterior distributions is continuous, then the equilibrium solution involves either a fully revealing sender or a noninformative sender.
\end{abstract}

\begin{IEEEkeywords}
Signaling games, Bayesian persuasion, information hiding, randomization, misinformation. 
\end{IEEEkeywords}

\section{Introduction and Related Works}
\label{sec:intro}

In classical control, communication and information theoretic problems, a feature that is classically common is that decision makers share a common goal. Such systems are referred to as {\it team theoretic} since the decision makers in the system form a team to optimize a common objective. In a team theoretic problem, a classical result (also related to Blackwell \cite{Blackwell1953}) states that sharing more information with other decision makers does not harm any decision maker in the system. In other words, a decision maker in a team theoretic system does not wish to withhold or distort information with the purpose of achieving its goal.

On the other hand, there may be decision makers who have misaligned objectives in a system, requiring a {\it game theoretic} analysis. For instance, in a decentralized control system, malicious agents may wish to distort the control performance while legitimate agents desire to optimize the control objective \cite{MaliciousSurvey2022,MaliciousTac2017,MaliciousTcns2022}. In such a communication system, a sender may wish to convey certain information to a receiver while protecting certain sensitive information from the receiver, where the sensitive information is correlated with the actual information \cite{AsoDiaAlaLin2019,IssWagKam2020}. In such game theoretic settings, a decision maker needs to strategically design what information to share with others by taking its own goal as well as other's goals into account. Unlike team theoretic settings, sharing more information may hurt a decision maker or even all the decision makers in the system \cite{BasGosScaZam2003,hogeboom2021continuity}. Therefore, a decision maker in a game theoretic setting may wish to hide or distort information to achieve a certain goal.

Such a setup between a sender and a receiver with misaligned objectives is referred to as a signaling game. In this paper, we focus on a specific signaling game problem of Stackelberg type, commonly referred to as Bayesian persuasion initiated by Kamenica and Gentzkow in \cite{KamGen2011}. In Bayesian persuasion, a sender wishes to steer the opinion of a receiver by designing its information revelation policy. In this paper, we aim to investigate the structure of the sender's policy for this fundamental problem. In particular, the problem of when and how the sender hides or distorts information is of significance from an information theoretic perspective with applications in a wide variety of fields.

In this field, a seminal work by Crawford and Sobel \cite{CraSob1982} introduces a signaling game problem where a sender and a receiver communicate in a Nash theoretic simultaneous setting. It is shown in \cite{CraSob1982} that at a Nash equilibrium, the sender needs to hide information from the receiver by mapping continuous source observations to discrete messages. On the other hand, Kamenica and Gentzkow investigate a Stackelberg type signaling game where again a sender conveys information to a receiver \cite{KamGen2011}. In this setting, the sender (leader) and then the receiver (follower) choose their strategies where the receiver acts given the strategy of the sender. Although the sender is transparent in terms of its choice of policy, the sender may convince the receiver to act according to its preference to a certain extent by strategically designing its information revelation policy. In this paper, we investigate conditions under which the sender reveals, hides or distorts information transmitted to the receiver for the Bayesian persuasion setting.

Signaling games problems are investigated in numerous studies in communication, control and economics literature \cite{AloCam2016,DugXu2016,Dughmi2017,KamXia2025,Tamura2018,AkyLanBas2017,SarYukGez2017,KazSarGezLinYuk2022,KazGezYuk2022,TsaTsa2021,TreTom2016,TreTom2021,BouTre2022,DeoKul2024Arxiv,VorKul2024,Ichihashi2019,Wu2023}. The work in \cite{AloCam2016} studies a Bayesian persuasion setting with the sender and the receiver having different prior beliefs regarding the source, and provide conditions under which the sender improves its payoff by communicating with the receiver. The work in \cite{DugXu2016} investigates computational complexity for obtaining the solution of the Bayesian persuasion problem. In particular, the problem is expressed as a linear program, and it is shown that the solution can be obtained in polynomial time, considering independent and identically distributed sources. In addition, the work in \cite{Dughmi2017} reviews computational methods for obtaining the Bayesian persuasion problem considering various settings such as with multiple agents. In \cite{KamXia2025}, the authors investigate whether commitment is beneficial for the sender by contrasting the simultaneous move setup with the Bayesian persuasion setup, and connect the notions of commitment at the sender and randomized policies. The Bayesian persuasion problem with Gaussian sources are investigated in \cite{Tamura2018,AkyLanBas2017}. In particular, \cite{Tamura2018} shows that the Bayesian persuasion solution for Gaussian sources and quadratic cost structures involves linear sender policies. In \cite{AkyLanBas2017}, information theoretic limits are investigated for a strategic communication setup with a biased sender where the source and the bias are Gaussian. The work in \cite{SarYukGez2017} considers continuous sources and investigates the Stackelberg equilibrium (Bayesian persuasion) as well as the Nash equilibrium in a signaling game setup with a deterministically biased sender. The work in \cite{KazSarGezLinYuk2022} analyzes the signaling game setup introduced in the seminal paper by Crawford and Sobel \cite{CraSob1982} and investigates the number of messages that can be sent at a Nash equilibrium with a continuous source. In \cite{KazGezYuk2022}, a privacy perspective is investigated for a signaling game setup considering Gaussian sources where the sender partially hides information due to privacy concerns. 

In addition, \cite{TsaTsa2021} considers a Bayesian persuasion setting with a discrete noisy symmetric channel between the sender and the receiver. The authors investigate informativeness in the Blackwell sense and show that a noisier symmetric channel leads to a lower utility for the sender. In this context, regularity properties of equilibria when information structures are perturbed are studied in \cite{hogeboom2021continuity,YukBas2024}. In \cite{TreTom2016}, the authors consider a Bayesian persuasion setting with a binary discrete source and a symmetric channel between the sender and the receiver and investigates equilibrium solution. In \cite{TreTom2021}, the authors incorporate decoder side information to a Bayesian persuasion setup with a noisy channel and investigates information theoretic limits of such a strategic setup (see also \cite{BouTre2021} where the focus is on utilities). The work in \cite{BouTre2022} considers a signaling game setup with two decoders where the decoders successively decode received messages and analyzes information theoretic limits of such a communication setup. In \cite{DeoKul2024Arxiv}, the authors investigate informativeness in the sense of minimum information that a receiver can recover by considering the worst case expected sender utility and the notion of $\epsilon$-Stackelberg equilibrium for the Bayesian persuasion problem with a uniformly distributed source. In \cite{VorKul2024}, a leader-follower type interaction is considered where the roles of the leader and the follower are flipped compared to the Bayesian persuasion setting. In such strategic interactions, the receiver in a sense chooses its policy to convince the sender to reveal as much information as possible. In \cite{Ichihashi2019}, a designer is incorporated into a sender and a receiver interaction where the designer limits the information (in Blackwell sense) that the sender can transmit. The work in \cite{Wu2023} investigates a Bayesian persuasion setup with a sequential interaction and obtains the equilibrium solution via recursively applying the concavification method described in \cite{KamGen2011}. In addition, a related line of work considers improving system performance by applying randomization at the sender or at the receiver, where the metrics involve probability of error of the source symbols at the receiver \cite{GokGezAri2010,DulGez2012}.  

Further related to our work, \cite{DavChrMal2022} is motivated by the spread of misinformation on social media, and a game theoretic approach is proposed where the government incentives social media platforms to reveal truthful information. In \cite{LuoRoz2025}, the sender is allowed to lie by transmitting a different message than the message generated by the committed encoding policy in the Bayesian persuasion setup, and the receiver has a mechanism to detect such misleading information with a certain probability. In \cite{EdeMin2022}, a binary source Bayesian persuasion setup with the sender acting according to the committed policy is considered where a message is interpreted as a lie if it does not match with the true source value and the receiver can probabilistically detect such a lie.

\section{Problem Formulation}
\label{sec:prob}

We consider the Bayesian persuasion problem \cite{KamGen2011} where a sender communicates with a receiver. The sender observes a discrete source $X\in\mathcal{X}$ where $X$ is an $\mathcal{X}$-valued random variable, and $\mathcal{X}$ is a finite set denoting the set of possible source observations. The sender maps its source observation to a message $M\in \mathcal{M}$ by employing an encoding policy $\gamma^s(\cdot)$ where $\mathcal{M}$ denotes the set of possible messages. In the Bayesian persuasion framework, the encoding policy is announced to the receiver. Then, the receiver takes an action $Y\in\mathcal{Y}$ given its message observation by using a decoding policy $\gamma^r(\cdot)$ where $\mathcal{Y}$ denotes the set of possible receiver actions. The communication setting is illustrated in Fig.~\ref{fig:comm}.

\begin{figure}
\begin{minipage}[b]{1.0\linewidth}
\centering
\centerline{\includegraphics[width=.8\textwidth]{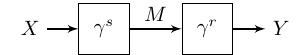}}
\end{minipage}
\caption{Communication setting.}
\label{fig:comm}
\end{figure}

The communication scenario corresponds to the classical source coding problem with a discrete source observation. However, unlike the classical source coding setting, the sender and the receiver do not share a common objective. In contrast, the sender wishes to have the receiver act in a certain fashion by designing the information that it provides to the receiver. We model the misaligned objectives of the sender and the receiver by expected utilities. In particular, the sender wishes to maximize 
\begin{align}
U^s(\gamma^s,\gamma^r) = \mathbb{E}[u^s(X,Y)],
\end{align}
whereas the receiver wishes to maximize
\begin{align}
U^r(\gamma^s,\gamma^r) = \mathbb{E}[u^r(X,Y)],
\end{align}
where $u^s(x,y)$ and $u^r(x,y)$ respectively denote the sender's utility and the receiver's utility when the source observation is $x\in\mathcal{X}$ and the receiver action is $y\in\mathcal{Y}$. The receiver chooses its policy given the announced encoding policy and knows that the sender is committed to this announced encoding policy. In other words, the sender does not deviate from its announced policy. This corresponds to a Stackelberg equilibrium notion \cite{BasOls1999} with the sender being the leader and the receiver being the follower. In other words, the aim in the Bayesian persuasion is to obtain the Stackelberg equilibrium solution where the sender chooses its policy first.

We now formally describe the Stackelberg equilibrium. Let $\gamma^s\in\Gamma^s$ and $\gamma^r\in\Gamma^r$, where $\Gamma^s$ and $\Gamma^r$ respectively denote the set of possible encoding and decoding policies. A set of policies $\{\gamma^{s,*},\gamma^{r,*}\}$ is referred to as a Stackelberg equilibrium if 
\begin{align}
U^s(\gamma^{s,*},\gamma^{r,*}(\gamma^{s,*}))
\geq 
U^s(\gamma^{s},\gamma^{r,*}(\gamma^{s}))
\label{eq:Stackelberg_def_1}
\end{align}
is satisfied for all $\gamma^{s}\in \Gamma^s$, where $\gamma^{r,*}(\gamma^{s})$ is such that
\begin{align}
U^r(\gamma^{s},\gamma^{r,*}(\gamma^{s}))
\geq 
U^r(\gamma^{s},\gamma^{r}(\gamma^{s}))
\label{eq:Stackelberg_def_2}
\end{align}
holds for all $\gamma^{r}\in \Gamma^{r}$. The notation $\gamma^{r}(\gamma^{s})$ is used to emphasize that the receiver chooses its policy given the sender's policy announcement. Here, we assume that when the receiver is indifferent between several actions, it chooses the action that the sender prefers \cite{KamGen2011}. A motivation for this assumption is that the sender may employ a slightly perturbed policy in order to induce the action that it prefers. In the following, we formalize this assumption.

\begin{assumption}\label{assumption:tiebreaking}
[Sender-Preferred Tie Breaking]
Let $\Gamma^{r,*}(\gamma^s)$ denote the set of receiver policies that attains the maximum in \eqref{eq:Stackelberg_def_2} for a given sender policy $\gamma^s$. If $\Gamma^{r,*}(\gamma^s)$ contains more than one policy, then the receiver employs a policy $\gamma^{r,*}\in\Gamma^{r,*}(\gamma^s)$ that satisfies
\begin{align}
U^s(\gamma^s,\gamma^{r,*}(\gamma^s)) 
\geq 
U^s(\gamma^s,\gamma^{r}(\gamma^s))
\end{align}
for all $\gamma^{r}\in\Gamma^{r,*}(\gamma^s)$. 
\end{assumption}

In this paper, our aim is to investigate the structure of encoding policies in equilibrium. We make the following distinction between encoding policies with respect to their structure.

\begin{definition}\label{def:deterministic}
\begin{itemize}
\item[(i)] [Truthful Information Hiding] A non-injective encoding policy is referred to as a \emph{lossy quantized encoding policy} if under $\gamma^s$, $P^{\gamma^s}(M=m\vert X=x)\in \{0,1\}$ for all $m\in\mathcal{M}$ and $x\in\mathcal{X}$, and there exist $x_1,x_2\in\mathcal{X}$ and $m\in\mathcal{M}$ such that $P^{\gamma^s}(M=m\vert X=x_1)=1$ and $P^{\gamma^s}(M=m\vert X=x_2)=1$. 
\item[(ii)] [Fully Revealing Policy] An encoding policy is referred to as a \emph{fully revealing encoding policy} if all source observations are mapped to distinct messages. 
\item[(iii)] [Noninformative Encoding Policy] An encoding policy is referred to as a \emph{noninformative encoding policy} if the encoded message is independent of the source random variable.
\end{itemize}
\end{definition}

\begin{remark}
A noninformative encoding policy may be viewed as a lossy quantized encoding policy where all source observations are mapped to the same message. 
\end{remark}

Definition~\ref{def:deterministic} categorizes encoding policies that essentially do not apply randomization. In the following, we formally make a distinction between encoding policies that use randomization or not. 

\begin{definition}\label{def:randomized}
[Misinformation Policy] An encoding policy $\gamma^s$ is referred to as a \emph{randomized encoding policy} if there exist $x_1,x_2\in\mathcal{X}$ and $m_1,m_2\in\mathcal{M}$ such that $P^{\gamma^s}(M=m_1\vert X=x_1)>0$, $P^{\gamma^s}(M=m_2\vert X=x_1)>0$ and $P^{\gamma^s}(M=m_1\vert X=x_2)>0$ hold. Such an encoding policy in a sense leads to \emph{misinformation} since the sender tries to confuse the receiver by mixing distinct source observations with certain nonzero probabilities.
\end{definition}

\begin{remark}
Definition~\ref{def:randomized} categorizes a policy as a randomized policy if an equivalent deterministic policy does not exist. Namely, for a policy $\gamma^s$ with $P^{\gamma^s}(M=m_1\vert X=x_1)>0$, $P^{\gamma^s}(M=m_2\vert X=x_1)>0$ and $P^{\gamma^s}(M=m_1\vert X=x)=P^{\gamma^s}(M=m_2\vert X=x)=0$ for all $x\neq x_1$, we can obtain an equivalent policy ${\gamma^{s}}'$ with $P^{{\gamma^{s}}'}(M=m'\vert X=x_1) =P^{\gamma^s}(M=m_1\vert X=x_1)+P^{\gamma^s}(M=m_2\vert X=x_1)$ where the rest of the mappings for $\gamma^{s}$ and ${\gamma^{s}}'$ are the same. This procedure can be applied repeatedly to obtain an equivalent deterministic policy if the initial policy is not categorized as a randomized policy in Definition~\ref{def:randomized}.
\end{remark}

A related work \cite{Sobel2020} formalizes the notion of a lie where a message transmission is interpreted as a source (or sources) revelation and a lie means that a message observation is not consistent with the corresponding interpretation of the message. In view of \cite{Sobel2020}, a randomized signaling may be interpreted as \emph{lying} since the sender mixes multiple source realizations to a single message in this case.

While in a different operational context, such lossy quantized encoding policies arise in equilibrium solutions for signaling games (in the Nash sense) as studied in the seminal paper by Crawford and Sobel \cite{CraSob1982}. While the sender must apply a lossy quantized encoding (or an equivalent policy) to continuous source observations at a Nash equilibrium in \cite{CraSob1982}, the Bayesian persuasion solution may involve lossy quantized encoding policies applied to discrete source observations, depending on the objectives. In addition, while randomization is irrelevant in \cite{CraSob1982}, the Bayesian persuasion solution may require randomization at the sender depending on the payoffs.

Whether in an equilibrium, coding/signaling policies are fully revealing, quantized, or randomized has significant implications in applications: The first arises in team theoretic classical communication problems as well is in statistical decision theory \cite{Blackwell1953}, the second arises in Nash theoretic simultaneous signaling games \cite{CraSob1982}, and the last, as we will study further, may appear in Stackelberg type (leader-follower) Bayesian signaling games which have recently been given much attention in part thanks to the highly influential study by Kamenica and Gentzkow \cite{KamGen2011}. In addition to different quantitative and mathematical properties of such equilibrium solutions; at the qualitative level, signaling behavior of the sender has important implications in applications: The fully revealing case is one where there is complete transparency, the randomized signaling leads to a general representation of what is typically referred to as misinformation which is a highly prevalent phenomenon of our digital age (see e.g., \cite{DavChrMal2022,BodSchWatFun2024,HosMlaCheGid2024}), and the quantized case is where there is information hiding but no misinformation. Therefore, it is an important problem to identify conditions that lead to such distinct behavior in signaling games.

{\bf Contributions.}
\begin{itemize}
\item In Theorems~\ref{thm:fully_revealing}-\ref{thm:quantized}, we derive conditions under which the Bayesian persuasion solution involves a randomized policy at the sender, where we consider the case when the receiver wishes to estimate the source. In particular, depending on the structure of the equilibrium solution under a deterministic policy restriction, we write explicit conditions under which randomization improves the sender's payoff. In other words, for certain payoff structures, the equilibrium solution is such that a source observation is mapped to distinct messages with nonzero probabilities (and thus randomized encoding is superior to deterministic coding; that is, misinformation is optimal for the sender).
\item In Theorem~\ref{thm:algorithm}, we derive an algorithmic procedure to obtain the Bayesian persuasion solution, where the procedure involves the computation of expected payoffs for finitely many posterior probability combinations. While the algorithm considers a receiver that tries to extract the source, the result generalizes to arbitrary payoff structures.
\item In Theorem~\ref{thm:continuous}, we show that if the sender's expected payoff is continuous with respect to posterior distributions, then the equilibrium solution involves either a fully revealing sender or a noninformative sender. For such payoff structures, a slight perturbation in a sender policy leads to a slight change in the sender's expected payoff when the receiver employs its best response. While this result considers a receiver that tries to estimate the source, it generalizes to more general receiver payoff structures stated in the manuscript. In addition, in Theorem~\ref{thm:zerosum}, we consider arbitrary zero-sum payoff structures and prove that the equilibrium solution induces noninformative signaling in this case. 
\end{itemize}

\section{Preliminaries}
\label{sec:game}

In this section, we first review the solution approach considered in the seminal paper by Kamenica and Gentzkow \cite{KamGen2011}. In particular, the equilibrium solution is posed as an optimization problem involving posterior distributions. For a message observation $m\in\mathcal{M}$, the receiver computes the posterior distribution of the source to determine its action. In particular, for a given computed posterior distribution $Q(\cdot)$, the receiver's action is given by
\begin{align}
y^*(Q) = \argmax_{y\in\mathcal{Y}} \sum_{x\in\mathcal{X}} u^r(x,y)Q(x)
\label{eq:ystar}
\end{align}
where the receiver chooses the action that gives a higher payoff for the sender in case multiple receiver actions yield the optimal value in \eqref{eq:ystar} as in \cite{KamGen2011}. For a particular posterior distribution $Q(\cdot)$, the contribution to the total payoff of the sender can be written as
\begin{align}
u^{s,*}(Q) = \sum_{x\in\mathcal{X}} u^s(x,y^*(Q))Q(x).
\label{eq:usstar}
\end{align}
Let $Q_m(\cdot)$ denote a posterior distribution computed by the receiver for a message observation $m\in\mathcal{M}$. Then, for a given encoding policy $\gamma^s$, the sender's expected payoff can be written as 
\begin{align}
U^{s,*}(\gamma^s)=
\sum_{m\in\mathcal{M}} 
u^{s,*} (Q_m) P(Q_m)
\label{eq:UeStar}
\end{align}
where $P(Q_m)$ denotes the probability of the posterior distribution $Q_m$ (i.e., probability that message $m$ is sent). The sender wishes to maximize \eqref{eq:UeStar} by choosing an encoding policy that induces a set of posteriors with certain probabilities. In order to verify that a set of posterior distributions is induced by a valid encoding policy, one can check if these posteriors and their probabilities are consistent with the prior distribution of the source. In particular, these posterior distributions and their probabilities must satisfy a \emph{Bayes plausibility} condition expressed in the following:
\begin{align}
\pi_x = 
\sum_{m\in\mathcal{M}} Q_m(x) P(Q_m),
\quad
\text{for all }x\in\mathcal{X},
\label{eq:bayesplausibility}
\end{align}
where $\pi_x\triangleq P(X=x)$. Then, we can express the sender's optimization problem as 
\begin{align}
&\max_{
\{Q_m\}_{m=0}^{|\mathcal{M}|-1}, \{ P(Q_m)\}_{m=0}^{|\mathcal{M}|-1}
}
\sum_{m\in\mathcal{M}} 
u^{s,*} (Q_m) P(Q_m) \nonumber\\
&\subjectto\,
\pi_x = 
\sum_{m\in\mathcal{M}} Q_m(x) P(Q_m)\;
\text{for all }x\in\mathcal{X}.
\label{eq:optim_problem}
\end{align}
While the optimization problem in \eqref{eq:optim_problem} is a very insightful formulation, it is nonetheless not easy to solve since the space of attainable posteriors is not an easy set to characterize. In particular, the formulation in \eqref{eq:optim_problem} performs an optimization on the space of probability distributions on posterior distributions, yielding an extremely large set. In addition, a Bayesian plausibility constraint is imposed, making the problem even more challenging. 

\begin{lemma}\label{lem:nonconvex_posteriors}
The set of attainable posterior combinations in \eqref{eq:optim_problem} is not convex. 
\end{lemma}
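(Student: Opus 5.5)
We will prove the lemma with an explicit counterexample. The plan is to exhibit two feasible points of \eqref{eq:optim_problem} whose convex combination violates the Bayes plausibility constraint \eqref{eq:bayesplausibility}. A feasible point is a tuple $(\{Q_m\}_m,\{P(Q_m)\}_m)$, viewed as a vector in a Euclidean space, that satisfies \eqref{eq:bayesplausibility}. The source of nonconvexity is that the constraint is bilinear in the variables $Q_m(x)$ and $P(Q_m)$. A convex combination of two bilinear-feasible points therefore generally produces cross terms that destroy the equality.

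The smallest setting suffices. Take $\mathcal{X}=\{0,1\}$ with $\pi_0=\pi_1=1/2$ and $\mathcal{M}=\{0,1\}$. Identify each posterior with $q_m\triangleq Q_m(0)$ and write $p_m\triangleq P(Q_m)$, with $p_0+p_1=1$. The constraint then reads $\pi_0=q_0p_0+q_1p_1$.

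First feasible point (fully revealing): $(q_0,q_1,p_0,p_1)=(1,0,1/2,1/2)$. It gives $q_0p_0+q_1p_1=1/2$.

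Second feasible point (same posteriors assigned to the other messages): $(q_0,q_1,p_0,p_1)=(0,1,1/2,1/2)$. It also gives $1/2$.

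Their midpoint is $(1/2,1/2,1/2,1/2)$, which happens to be feasible, so this pair does not work and the weights must be asymmetric. Use instead the following two points.

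Point A: $(q_0,q_1,p_0,p_1)=(1,0,1/2,1/2)$.

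Point B: $(q_0,q_1,p_0,p_1)=(1/3,1,3/4,1/4)$. Check: $\tfrac13\cdot\tfrac34+1\cdot\tfrac14=\tfrac12$, so B is feasible.

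Their midpoint is $(2/3,\,1/2,\,5/8,\,3/8)$. It gives
\begin{align}
\tfrac23\cdot\tfrac58+\tfrac12\cdot\tfrac38=\tfrac{5}{12}+\tfrac{3}{16}=\tfrac{29}{48}\neq\tfrac12 .
\end{align}
So the midpoint violates \eqref{eq:bayesplausibility} and is not attainable, which shows the attainable set is not convex.

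The only real obstacle is interpretive: fixing what ``the set of attainable posterior combinations'' means as a subset of a vector space, namely the joint vector of posteriors and their probabilities with a fixed message labeling. Under that reading the example above settles the claim. We will also remark that relabeling messages cannot restore convexity, because the set is defined coordinatewise in $(Q_m,P(Q_m))$.
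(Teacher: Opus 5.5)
Your arithmetic is correct, and for the joint vector $(\{Q_m\},\{P(Q_m)\})$ with fixed message labels your example does show that the feasible set of \eqref{eq:optim_problem} is not convex. The gap is that this is not the set the lemma is about, and under the intended notion your key step fails. In the paper, a posterior combination $\{Q_m\}$ is attainable if \emph{some} probabilities $P(Q_m)$ satisfy \eqref{eq:bayesplausibility}, i.e., if $\pi$ lies in the convex hull of the $Q_m$'s. The convex combination is formed on the posteriors only, the weights are then chosen afresh, and message relabelings are factored out; this is why the paper checks every permutation $(i,j,k)$.

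In that sense your midpoint $\{[2/3,1/3]^T,[1/2,1/2]^T\}$ is attainable: take $P=(0,1)$, since the second posterior equals $\pi$. The midpoint under the swapped pairing, $\{[1,0]^T,[1/6,5/6]^T\}$, is also attainable, with $P=(2/5,3/5)$. More fundamentally, no binary example can work. After labeling the messages so that $Q_0(0)\ge\pi_0\ge Q_1(0)$, the attainable set in the coordinates $(Q_0(0),Q_1(0))$ is the rectangle $[\pi_0,1]\times[0,\pi_0]$, which is convex. The missing ingredient is a source with at least three values, where the condition ``$\pi$ lies in the convex hull'' is genuinely nonconvex. The paper takes $\pi=[0.5,0.4,0.1]^T$, the fully revealing posteriors, and a second Bayes-plausible triple $Q_0',Q_1',Q_2'$. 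It then verifies that for every pairing of the two triples, the midpoint triple does not contain $\pi$ in its convex hull.

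Your joint reading also does not support the role the lemma plays in the paper, which is to say that the attainable posteriors are hard to characterize. The nonconvexity you exhibit comes purely from the bilinear parametrization. With $r_m(x)=Q_m(x)P(Q_m)$, the joint law of $(X,M)$, constraint \eqref{eq:bayesplausibility} becomes the linear constraint $\sum_m r_m(x)=\pi_x$, whose feasible set is a convex polytope. So nonconvexity in the $(Q,P)$ coordinates says nothing about the geometry of attainable posteriors. Finally, your closing claim about relabeling needs an actual check rather than the ``defined coordinatewise'' justification. In your example it does hold, since the swapped pairing, with weights also averaged, gives $23/48\neq 1/2$.
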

\begin{proof}
See Appendix~\ref{proof:nonconvex_posteriors}.
\end{proof}

The Bayesian plausibility condition in \eqref{eq:bayesplausibility} states that a posterior distribution combination is feasible when the prior distribution of the source is a convex combination of this posterior distribution combination. Then, if we have a convex or concave $u^{s,*}(Q)$, we obtain the following result by applying the convexity or concavity property to the optimization problem in \eqref{eq:optim_problem} together with the Bayesian plausibility property. 

\begin{theorem}[{Kamenica and Gentzkow \cite{KamGen2011}}]\label{thm:KamGen}
If $u^{s,*}(Q)$ is concave, the sender's payoff does not improve by revealing information about the source for any prior distribution. If $u^{s,*}(Q)$ is convex and not concave, the optimal sender reveals at least partial information about the source for every prior distribution.
\end{theorem}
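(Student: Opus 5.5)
Both parts follow from applying Jensen's inequality to the reformulation \eqref{eq:optim_problem}, with the Bayes plausibility constraint \eqref{eq:bayesplausibility} supplying the convex-combination structure. The noninformative policy maps every $x$ to a single message $m_0$. It induces the single posterior $Q_{m_0}=\pi$ with probability one, so its payoff is $u^{s,*}(\pi)$. I will compare every feasible posterior combination in \eqref{eq:optim_problem} against this baseline.

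\emph{Concave case.} Take any encoding policy $\gamma^s$ with induced posteriors $\{Q_m\}$ and probabilities $\{P(Q_m)\}$. By \eqref{eq:bayesplausibility}, $\pi=\sum_m P(Q_m)Q_m$ is a convex combination of the $Q_m$, with weights summing to one because they are the message probabilities. Concavity of $u^{s,*}$ on the simplex then gives, by finite Jensen,
\begin{align*}
U^{s,*}(\gamma^s)=\sum_{m} P(Q_m)\,u^{s,*}(Q_m)\le u^{s,*}\Big(\sum_m P(Q_m)Q_m\Big)=u^{s,*}(\pi).
\end{align*}
Hence no policy beats the noninformative one, for any prior $\pi$. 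This proves the first claim.

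\emph{Convex, nonconcave case.} Fix a prior $\pi$. I will show that some informative policy weakly improves on the noninformative one, and that the optimum can be chosen to reveal information. The plan is as follows.

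First, since $\mathcal{X}$ is finite, I use the fully revealing policy, which induces the degenerate posteriors $\delta_x$ with probabilities $\pi_x$. Writing $\pi=\sum_x \pi_x\delta_x$, convexity gives
\begin{align*}
u^{s,*}(\pi)\le \sum_x \pi_x u^{s,*}(\delta_x).
\end{align*}
So full revelation does at least as well as silence. This already shows that an optimal sender loses nothing by revealing.

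Second, to argue that the optimal sender \emph{must} reveal at least partial information, I need a strict improvement or a careful reading of ``optimal''. Nonconcavity means $u^{s,*}$ is not affine on the simplex. I would try to show that, for every interior $\pi$, there is some splitting $\pi=\lambda Q_1+(1-\lambda)Q_2$ with
\begin{align*}
\lambda u^{s,*}(Q_1)+(1-\lambda)u^{s,*}(Q_2)>u^{s,*}(\pi).
\end{align*}
The argument would run by contradiction. If equality held for all splittings through every $\pi$, then $u^{s,*}$ would be affine along every segment, hence affine on the simplex. An affine function is concave, which contradicts the hypothesis.

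\emph{Main obstacle.} The difficulty lies in this second step. Convexity plus nonaffinity yields strictness only on some segment, not necessarily on a segment through the given $\pi$. At a boundary prior, or where $u^{s,*}$ happens to be affine near $\pi$, strict improvement may fail. In that case I would fall back on the reading used in \cite{KamGen2011}: there exists an optimal policy that is informative, namely full revelation, which by the first step attains at least the noninformative value. I would state explicitly which reading is intended; the weak version follows cleanly from the convexity inequality above.
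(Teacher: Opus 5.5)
Your concave case is the Jensen/Bayes-plausibility argument the paper alludes to, and it is fine. The gap is in the convex case. You only establish the weak statement that full revelation does at least as well as silence, and you suggest the strict statement may fail where $u^{s,*}$ is affine near $\pi$. That fallback never uses the hypothesis ``not concave'': it holds verbatim for a constant $u^{s,*}$, which is both convex and concave. So it cannot be the intended content.

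The obstacle you describe is also not real for full-support priors. The missing idea is to compare against the fixed global splitting $\pi=\sum_x\pi_x\delta_x$, rather than searching for a local splitting through $\pi$. Define $g(Q)=\sum_x Q(x)u^{s,*}(\delta_x)-u^{s,*}(Q)$ on the simplex. It is concave (linear minus convex), and it is nonnegative (convexity applied to $Q=\sum_x Q(x)\delta_x$). Suppose $g(\pi)=0$ at some $\pi$ with $\pi_x>0$ for all $x$. For any $Q$, take $0<t<\min_x\pi_x$, so that $Q'=\pi+t(\pi-Q)$ is still a distribution. Then $\pi=\frac{1}{1+t}Q'+\frac{t}{1+t}Q$, and concavity with $g(Q')\ge 0$ gives $0=g(\pi)\ge\frac{t}{1+t}g(Q)\ge 0$. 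Hence $g\equiv 0$, so $u^{s,*}$ is affine and therefore concave, a contradiction. Thus $\sum_x\pi_x u^{s,*}(\delta_x)>u^{s,*}(\pi)$ at every full-support prior, even where $u^{s,*}$ is affine near $\pi$. The noninformative policy is strictly suboptimal, and every optimal sender reveals information.

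Two smaller points. First, your caveat about boundary priors is legitimate. If $\pi$ lies on a proper face of the simplex, Bayes plausibility confines all posteriors to that face. There $u^{s,*}$ can be affine while being convex and nonconcave on the whole simplex (trivially so at a vertex). The statement should therefore be read for full-support priors, as in \cite{KamGen2011}. Second, your assertion that full revelation is an optimal policy does not follow from the comparison with silence alone. It follows from applying convexity to each induced posterior, $u^{s,*}(Q_m)\le\sum_x Q_m(x)u^{s,*}(\delta_x)$, then multiplying by $P(Q_m)$ and summing with \eqref{eq:bayesplausibility}. This gives $U^{s,*}(\gamma^s)\le\sum_x\pi_x u^{s,*}(\delta_x)$ for every $\gamma^s$. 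It also shows the optimum is attained, so ``the optimal sender'' in the statement is well defined.
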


While the result in Theorem~\ref{thm:KamGen} is a useful observation, it imposes a continuity requirement on $u^{s,*}(\cdot)$ dictated by the convexity or concavity property. 

\section{Properties of Equilibrium Behavior}

\subsection{Completely Aligned and Misaligned Payoffs}

In this section, we investigate two extremes of completely aligned and completely misaligned payoffs. We first revisit the classical setting where the sender and the receiver have aligned objectives. This setting is referred to as team theoretic since the sender and the receiver form a team to optimize a common objective \cite{YukBas2024}. In this case, a classical result by Blackwell \cite{Blackwell1953} applied to our setting with aligned objectives states that hiding information does not benefit the decision makers. In the following, we state this result and follow the proof given in \cite[Theorem~7.3.2]{YukBas2024}.

\begin{theorem}[{Blackwell \cite{Blackwell1953}}]\label{thm:blackwell}
Suppose that $u^s(x,y)=u^r(x,y) $ holds for all $x\in\mathcal{X}$ and $y\in\mathcal{Y}$. Then, the Stackelberg equilibrium solution involves a fully revealing encoding policy.
\end{theorem}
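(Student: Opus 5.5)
The plan is to show that, when $u^s=u^r=u$, the fully revealing encoding policy attains the largest sender payoff among all policies. I will then invoke the Stackelberg definition in \eqref{eq:Stackelberg_def_1} to conclude. Tie-breaking under Assumption~\ref{assumption:tiebreaking} plays no role here: since the payoffs coincide, any receiver best response is automatically sender-preferred.

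\emph{Upper bound for an arbitrary policy.} Fix any (possibly randomized) encoding policy $\gamma^s$ and let $\gamma^r$ be the receiver's best response to it. For each realization $(x,m)$, the receiver's action $\gamma^r(m)$ belongs to $\mathcal{Y}$. Therefore
\begin{align}
u(x,\gamma^r(m)) \leq \max_{y\in\mathcal{Y}} u(x,y).
\end{align}
Taking expectations over the joint law of $(X,M)$ induced by $\gamma^s$ gives
\begin{align}
U^s(\gamma^s,\gamma^{r,*}(\gamma^s)) \leq \sum_{x\in\mathcal{X}}\pi_x \max_{y\in\mathcal{Y}} u(x,y).
\end{align}
This bound does not depend on $\gamma^s$. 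If $\mathcal{Y}$ is not finite, I would assume the maxima exist, or else replace them by suprema and argue the same way. In posterior language, the same step reads $u^{s,*}(Q_m)=\max_y\sum_x u(x,y)Q_m(x)\le\sum_x Q_m(x)\max_y u(x,y)$. Averaging over $m$ and using Bayes plausibility \eqref{eq:bayesplausibility} then returns the same bound. This is the convexity argument behind \cite[Theorem~7.3.2]{YukBas2024}.

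\emph{Achievability.} Under a fully revealing encoding, every message $m$ induces a degenerate posterior $Q_m=\delta_x$ for a unique $x$. The receiver's best response is then $y^*(\delta_x)\in\argmax_y u(x,y)$, so the sender's payoff equals $\sum_x\pi_x\max_y u(x,y)$ and the upper bound is attained. Combining the two steps, inequality \eqref{eq:Stackelberg_def_1} holds for every $\gamma^s\in\Gamma^s$, with the fully revealing policy and its best response in place of $\gamma^{s,*}$ and $\gamma^{r,*}$. This proves the theorem.

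The only delicate point is interpretive rather than technical. The argument shows that a fully revealing policy is \emph{a} Stackelberg solution, not that it is the unique one: another policy may also attain the bound, for example when two source values share a common optimal action. The theorem should therefore be read as saying that an equilibrium exists which involves full revelation.
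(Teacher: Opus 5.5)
Your proof is correct, but it takes a more direct route than the paper. The paper first proves that $V(Q)=\max_{y}\sum_x u(x,y)Q(x)$ is convex. It then uses Jensen's inequality to show that any garbling $P(m^2\vert x)=\sum_{m^1}P(m^2\vert m^1)P(m^1\vert x)$ of an information structure weakly lowers the common payoff. Finally, it observes that every encoding policy is a garbling of the fully revealing one.

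You skip the garbling machinery. The pointwise bound $u(x,\gamma^r(m))\le\max_y u(x,y)$ makes the full-information value $\sum_x\pi_x\max_y u(x,y)$ a ceiling for every encoding policy, and indeed for every receiver rule, best response or not. Full revelation then attains that ceiling. Your posterior-language inequality is exactly the paper's Jensen step, specialized to the decomposition $Q_m=\sum_x Q_m(x)\delta_x$, that is, with the fully revealing structure taken as the finer experiment.

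Your version buys brevity and self-containment; it needs neither the convexity lemma nor the Markov-chain representation. The paper's version buys monotonicity of the common payoff along the whole Blackwell (garbling) order, not just optimality of the finest structure. That general form is what you would need where the full-information bound is not attainable, for instance the noisy-channel extension mentioned after the theorem. Your bound is tight only because full revelation is feasible here.

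Your side remarks are accurate. Tie-breaking is immaterial when $u^s=u^r$. The argument establishes weak optimality of full revelation rather than uniqueness, and the paper's proof likewise shows only weak optimality.
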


\begin{proof}
See Appendix~\ref{proof:blackwell}.
\end{proof}

\begin{remark}
The proof of Theorem~\ref{thm:blackwell} relies on a convexity argument and Jensen's inequality. In addition, the result of Theorem~\ref{thm:blackwell} is still valid when there is a noisy channel between the sender and the receiver.
\end{remark}

Theorem~\ref{thm:blackwell} addresses the case when the payoffs are completely aligned and shows that the equilibrium solution is fully revealing in this case. Next, we consider the other extreme case when the payoffs are completely misaligned by taking a zero-sum payoff structure. For zero-sum games, it has been shown recently that more information does not hurt the party receiving it \cite{pkeski2008comparison, HBY2020arXiv,hogeboom2021continuity}. Consistent with this observation, in the following, we show that the equilibrium solution becomes noninformative under such completely misaligned payoff structures.

\begin{theorem}\label{thm:zerosum}
Suppose that $u^s(x,y)=-u^r(x,y)$ holds for all $x\in\mathcal{X}$ and $y\in\mathcal{Y}$. Then, the equilibrium solution involves a noninformative encoding policy.
\end{theorem}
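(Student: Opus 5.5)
The plan is to show that, in the zero-sum case, the function $u^{s,*}(Q)$ in \eqref{eq:usstar} is concave. Theorem~\ref{thm:KamGen} then says that revealing information never improves the sender's payoff, so a noninformative encoding policy attains the maximum in \eqref{eq:optim_problem}.

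The key observation is what the receiver's best response becomes when $u^s=-u^r$. The receiver's optimal action $y^*(Q)$ in \eqref{eq:ystar} maximizes $\sum_x u^r(x,y)Q(x)$, so it minimizes $\sum_x u^s(x,y)Q(x)$. Under zero sum, every maximizer of the receiver's objective gives the sender the same value, so Assumption~\ref{assumption:tiebreaking} plays no role. We therefore get
\begin{align}
u^{s,*}(Q) = \sum_{x\in\mathcal{X}} u^s(x,y^*(Q))Q(x) = \min_{y\in\mathcal{Y}} \sum_{x\in\mathcal{X}} u^s(x,y)Q(x).
\end{align}
For each fixed $y$, the map $Q\mapsto \sum_x u^s(x,y)Q(x)$ is linear in $Q$. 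A pointwise minimum of linear functions is concave, so $u^{s,*}$ is concave on the simplex.

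To make the argument self-contained rather than only citing Theorem~\ref{thm:KamGen}, I would also give the Jensen step directly. Take any Bayes-plausible collection $\{Q_m,P(Q_m)\}$ satisfying \eqref{eq:bayesplausibility}. Concavity, together with $\sum_m Q_m P(Q_m)=\pi$, gives
\begin{align}
\sum_m u^{s,*}(Q_m)P(Q_m)\le u^{s,*}\Big(\sum_m Q_m P(Q_m)\Big)=u^{s,*}(\pi).
\end{align}
The right-hand side is exactly the payoff of a noninformative policy, whose single induced posterior is the prior $\pi$. Hence the noninformative policy is optimal.

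The main obstacle is the existence of the minimum over $\mathcal{Y}$. If $\mathcal{Y}$ is finite, or compact with $u^r$ continuous in $y$, the best response exists and the identity above holds. Otherwise I would replace $\min$ with $\inf$ wherever \eqref{eq:ystar} is well defined; an infimum of linear functions is still concave, so the argument goes through. A second subtlety is that other informative policies may also be optimal, because the inequality need not be strict. The claim "the equilibrium solution involves a noninformative encoding policy" should therefore be read as: a noninformative policy is an equilibrium, and no other policy does strictly better.
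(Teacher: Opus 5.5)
Your proposal is correct and follows essentially the same route as the paper. The paper also uses that, under $u^s=-u^r$, the receiver's best response is the sender's least-preferred action, which makes $u^{s,*}$ concave, and then invokes Theorem~\ref{thm:KamGen}; it checks the two-point concavity inequality directly instead of writing $u^{s,*}(Q)=\min_{y}\sum_x u^s(x,y)Q(x)$. Your explicit Jensen step and your remark that tie-breaking is irrelevant are harmless additions that make the argument self-contained.
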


\begin{proof}
See Appendix~\ref{proof:zerosum}.
\end{proof}

In view of Theorems \ref{thm:KamGen}-\ref{thm:zerosum}, in this paper, we provide further results on when an equilibrium solution may be fully revealing, quantized, or randomized in a Stackelberg type Bayesian signaling game.

\subsection{Continuous Expected Payoff for the Sender}

In this section, we consider a receiver that tries to estimate the source with its action. More specifically, we let $\mathcal{X}=\mathcal{Y}=\{0,1,\dots,M-1\}$, $u^r(x,y)=1$ for all $x=y$, and $u^r(x,y)=0$ for all $x\neq y$, where $M$ denotes the cardinality of the source/action space. We note that it is possible to relabel the source and action values. This allows us to take $\pi_0\geq \pi_1\geq\cdots\geq \pi_{M-1}$ without loss of generality for notational convenience. In addition, if $\pi_i=\pi_j$ is satisfied with $i<j$, we let $u^s(i,i)+u^s(j,i) \geq u^s(i,j)+u^s(j,j)$ so that the receiver takes the action $y=i$ when indifferent between $y=i$ and $y=j$.

In Theorem~\ref{thm:blackwell} and Theorem~\ref{thm:zerosum}, it is shown that the equilibrium solution involves a fully revealing sender and a noninformative sender in a team setup and a zero-sum setup, respectively. In these cases, convex analytical arguments are employed to prove the results where the payoffs are arbitrary. We note that while convexity/concavity implies continuity, the converse is not necessarily true. In the following, we make a continuity assumption and present a result that is consistent with Theorems~\ref{thm:blackwell}-\ref{thm:zerosum}. Interestingly, it is observed that for the Bayesian persuasion problem, the considered continuity property also leads to either fully revealing or noninformative signaling as the equilibrium solution for the considered receiver payoff structure. More specifically, we employ a linear algebraic argument and show that when the function $u^{s,*}(\cdot)$ is continuous, neither randomization nor quantization with partial information revelation improves the sender's payoff. To prove this result, we need an auxiliary result that connects the continuity of $u^{s,*}(\cdot)$ to a condition involving the payoffs.

\begin{lemma}\label{lem:continuity}
We have that $u^{s,*}(\cdot)$ is continuous if and only if the following holds. For all posteriors $Q$ satisfying 
\begin{align}
\sum_{x\in\mathcal{X}}u^r(x,y)Q(x)= \sum_{x\in\mathcal{X}}u^r(x,y')Q(x)>\sum_{x\in\mathcal{X}}u^r(x,y'')Q(x)
\label{eq:lem_continuity_eq1}
\end{align}
for all $y,y'\in A$ and $y''\notin A$ with $A\subseteq \mathcal{Y}$, we have that
\begin{align}
\sum_{x\in\mathcal{X}}u^s(x,y)Q(x)= \sum_{x\in\mathcal{X}}u^s(x,y')Q(x) 
\label{eq:lem_continuity_eq2}
\end{align}
for all $y,y'\in A$.
\end{lemma}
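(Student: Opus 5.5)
We prove the two directions separately. Throughout, write $g_y(Q)=\sum_{x}u^r(x,y)Q(x)$ and $h_y(Q)=\sum_x u^s(x,y)Q(x)$. Both are linear, hence continuous, in $Q$ on the probability simplex $\Delta$. By \eqref{eq:ystar} and \eqref{eq:usstar}, $u^{s,*}(Q)=\max\{h_y(Q): y\in A(Q)\}$, where $A(Q)=\arg\max_y g_y(Q)$. For each $Q$, $A(Q)$ is exactly the set $A$ for which \eqref{eq:lem_continuity_eq1} holds at $Q$. The whole argument therefore reduces to understanding how $A(Q')$ behaves for $Q'$ near $Q$.

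\textbf{Step 1 (local structure of the best-response set).} For every $Q\in\Delta$ there is a neighborhood $N$ of $Q$ with $A(Q')\subseteq A(Q)$ for all $Q'\in N$. This holds because each $y''\notin A(Q)$ satisfies a strict inequality $g_{y''}(Q)<\max_y g_y(Q)$, and by continuity and finiteness of $\mathcal{Y}$ this strict gap persists near $Q$.

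\textbf{Step 2 (sufficiency).} Assume the condition of the lemma holds. Then at each $Q$, all $h_y(Q)$ with $y\in A(Q)$ share a common value $c(Q)$, so $u^{s,*}(Q)=c(Q)$. Fix $Q$ and a sequence $Q_n\to Q$. By Step 1, for large $n$ we have $A(Q_n)\subseteq A(Q)$. Applying the condition at $Q_n$ gives $u^{s,*}(Q_n)=h_{y_n}(Q_n)$ for any $y_n\in A(Q_n)\subseteq A(Q)$. Hence
\begin{align*}
|u^{s,*}(Q_n)-u^{s,*}(Q)|\le \max_{y\in A(Q)}|h_y(Q_n)-h_y(Q)|\to 0,
\end{align*}
using $h_y(Q)=c(Q)=u^{s,*}(Q)$ for all $y\in A(Q)$. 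This shows $u^{s,*}$ is continuous.

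\textbf{Step 3 (necessity).} Suppose instead that at some $Q$ there exist $y_1,y_2\in A(Q)$ with $h_{y_1}(Q)>h_{y_2}(Q)$. We will exhibit a sequence $Q_n\to Q$ with $A(Q_n)=\{y_2\}$; along it $u^{s,*}(Q_n)=h_{y_2}(Q_n)\to h_{y_2}(Q)<h_{y_1}(Q)\le u^{s,*}(Q)$, so $u^{s,*}$ is discontinuous at $Q$.

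For the receiver payoff fixed in this subsection, $g_y(Q)=Q(y)$, so $A(Q)$ is the set of maximal coordinates of $Q$. Define $Q_\epsilon=Q+\epsilon(e_{y_2}-e_{y_1})$. For small $\epsilon>0$ this is still a distribution, because $Q(y_1)=\max_x Q(x)>0$. Moreover $y_2$ becomes the unique maximizer: its coordinate increases, every other coordinate of $Q$ that was maximal either decreases ($y_1$) or stays fixed, and the non-maximal coordinates remain strictly below the maximum. Note that all states lie in a common action set only when $Q$ is uniform, but the construction above does not require this. Letting $\epsilon\to0$ completes the argument.

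\textbf{Main obstacle.} The delicate point is Step 3 for a general $u^r$. There, making a chosen $y_2\in A$ the strict best response requires a direction $d$ in the tangent space of the simplex with $g_{y_2}(d)>g_y(d)$ for all $y\in A\setminus\{y_2\}$, which can fail if $y_2$ is weakly dominated inside $A$. If we want the lemma for general $u^r$, we would instead pick, among the $y\in A$ that are strict best responses on some open set accumulating at $Q$, one with minimal $h_y(Q)$. The lemma's phrasing for general $A$ suggests it should be read under the estimation payoff structure, where the explicit perturbation above settles the issue.
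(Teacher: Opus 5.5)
Your proof is correct. The necessity direction uses the same idea as the paper: perturb $Q$ inside the simplex so that the tied action the sender likes less becomes the receiver's unique best response, which forces a downward jump of $u^{s,*}$.

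Your execution is tighter than the paper's in two places.

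\textbf{Necessity.} The paper's perturbation adds $\epsilon$ to coordinate $y''$, removes $\epsilon/(|A|-1)$ from the other coordinates of $A$, and sets $Q_\epsilon(x)=0$ off $A$. It justifies validity by $Q(x)=1/|A|$ on $A$. As written, this is a legitimate perturbation of $Q$ only when $Q$ is the critical posterior uniform on $A$. The lemma, however, quantifies over every $Q$ whose set of maximal coordinates is $A$, and such $Q$ may put mass outside $A$. Your $Q+\epsilon(e_{y_2}-e_{y_1})$ needs only $Q(y_1)=\max_x Q(x)>0$ and leaves the coordinates outside $A$ untouched, so it covers the general case.

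\textbf{Sufficiency.} The paper only says the converse is ``similar.'' Your argument via the local inclusion $A(Q')\subseteq A(Q)$ actually supplies this direction, and it holds for any $u^r$.

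\textbf{Caveat on general $u^r$.} Your remark is well founded. Take $\mathcal{Y}=\{a,b\}$ with $u^r(\cdot,a)=u^r(\cdot,b)$. Then $u^{s,*}(Q)=\max\{h_a(Q),h_b(Q)\}$ is continuous, yet the condition fails wherever $h_a(Q)\neq h_b(Q)$. So the ``only if'' direction genuinely depends on the estimation payoff structure, which is the setting in which the paper states the lemma.
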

\begin{proof}
See Appendix~\ref{proof:lem_continuity}.
\end{proof}

Lemma~\ref{lem:continuity} demonstrates that for all posteriors inducing equal receiver preferences for certain actions, the sender must be indifferent between those actions in order to make $u^{s,*}(\cdot)$ continuous. An interpretation is that since the receiver chooses an action that gives a higher payoff for the sender for these posteriors due to Assumption~\ref{assumption:tiebreaking}, the conditions in Lemma~\ref{lem:continuity} must be satisfied to avoid discontinuities at these posteriors. In the following, we employ this equivalent condition to prove the desired result.

\begin{theorem}\label{thm:continuous}
If $u^{s,*}(\cdot)$ is continuous, then the equilibrium solution involves either a fully revealing or a noninformative encoding policy regardless of the prior distribution.
\end{theorem}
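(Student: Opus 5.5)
The plan is to use Lemma~\ref{lem:continuity} to show that continuity forces the sender's payoff structure to be very rigid under the estimation receiver $u^r(x,y)=\mathbf{1}\{x=y\}$. With this $u^r$, the receiver's expected payoff for action $y$ under posterior $Q$ is just $Q(y)$. So the indifference sets in \eqref{eq:lem_continuity_eq1} are exactly the posteriors $Q$ for which the actions in $A$ all attain the maximal probability $\max_x Q(x)$.

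\textbf{Step 1: characterize $u^s$ through pairwise ties.} For any pair $i\neq j$, Lemma~\ref{lem:continuity} applied with $A=\{i,j\}$ requires
\begin{align}
\sum_{x} \big(u^s(x,i)-u^s(x,j)\big) Q(x) = 0
\end{align}
for every $Q$ with $Q(i)=Q(j)>Q(k)$ for all $k\notin\{i,j\}$. This set of posteriors is relatively open in the affine set $\{Q(i)=Q(j),\ \sum_x Q(x)=1\}$ and has full dimension $M-2$ there. A linear functional vanishing on it must therefore vanish on the whole linear span $\{Q: Q(i)=Q(j)\}$. Define $d_x=u^s(x,i)-u^s(x,j)$. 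Testing on the vectors $e_k$ for $k\neq i,j$ gives $d_k=0$, and testing on $e_i+e_j$ gives $d_i+d_j=0$. Thus $u^s(k,i)=u^s(k,j)$ for every $k\notin\{i,j\}$, and $u^s(i,i)-u^s(i,j)=u^s(j,j)-u^s(j,i)$.

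Now write $u^s(x,y)=a_x+c\,\mathbf{1}\{x=y\}$, with $a_x$ the common off-diagonal value in row $x$. I would show $c$ is a single constant as follows. Taking $k\notin\{i,j\}$, the first relation says that for each row $k$ all off-diagonal entries $u^s(k,y)$, $y\neq k$, agree; this needs $M\ge 3$, and the case $M=2$ is treated directly. Call this common value $a_k$ and set $c_k=u^s(k,k)-a_k$. The second relation then gives $c_i=c_j$ for every pair, so $c_k\equiv c$. Hence
\begin{align}
u^{s,*}(Q)=\sum_x a_xQ(x)+c\max_x Q(x).
\end{align}
For $M=2$ the identity for $d$ directly gives the same form with $c=u^s(0,0)-u^s(0,1)=u^s(1,1)-u^s(1,0)$.

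\textbf{Step 2: conclude via convexity.} The term $\sum_x a_xQ(x)$ is linear in $Q$. Under Bayes plausibility \eqref{eq:bayesplausibility} its expectation equals the constant $\sum_x a_x\pi_x$ for every encoding policy, so only $c\,\mathbb{E}[\max_x Q_m(x)]$ matters.

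If $c>0$, the function $Q\mapsto\max_x Q(x)$ is convex and attains its maximum value $1$ exactly at the degenerate posteriors. Hence $\mathbb{E}[\max_x Q_m(x)]\le 1$, with equality iff every induced posterior is a point mass, which is exactly the fully revealing policy. This is the same Jensen argument used for Theorem~\ref{thm:blackwell}.

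If $c<0$, the sender wants to minimize $\mathbb{E}[\max_x Q_m(x)]$. Jensen gives $\mathbb{E}[\max_x Q_m(x)]\ge\max_x\pi_x$, which is attained by the noninformative policy, mirroring Theorem~\ref{thm:zerosum}.

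If $c=0$, the payoff is constant over all policies, so in particular a fully revealing or noninformative policy is optimal.

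\textbf{Main obstacle.} The delicate step is the extension argument in Step 1: showing that the open tie region $\{Q(i)=Q(j)>Q(k)\ \forall k\notin\{i,j\}\}$ spans the hyperplane $\{Q(i)=Q(j)\}$, so that one can pass from "vanishing on the tie region" to "vanishing on the hyperplane." One must also handle the paper's tie-breaking convention for equal priors, and the fact that Lemma~\ref{lem:continuity} concerns ties at arbitrary posteriors rather than only at the prior.

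A secondary subtlety is the $c<0$ case, where other policies can tie with the noninformative one. Here I would argue that the statement only asserts that the equilibrium \emph{can be taken} to be noninformative. In the strict reading, one notes that any optimal policy induces posteriors all sharing the same maximizing coordinate, so it can be merged into a noninformative one with the same payoff.
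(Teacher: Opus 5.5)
Your proposal is correct and follows the paper's route: Lemma~\ref{lem:continuity} forces $u^s(x,y)=a_x+c\,\mathbf{1}\{x=y\}$, a source-dependent offset plus a multiple of $u^r$. The sign of $c$ then reduces the problem to the team case of Theorem~\ref{thm:blackwell} or the zero-sum case of Theorem~\ref{thm:zerosum}. The differences are only in execution: your pairwise tie-region spanning argument derives this structure directly, where the paper counts $M^2-M-1$ constraints from the uniform posteriors $Q_a$, $Q_i$ and asserts their independence without proof; and your explicit Jensen step on $\max_x Q(x)$ shows why the offset $a_x$ is harmless under Bayes plausibility, which the paper only calls ``essentially equivalent.''
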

\begin{proof}
See Appendix~\ref{proof:continuous}.
\end{proof}

We note that while Theorem~\ref{thm:blackwell} and \ref{thm:zerosum} are consistent, as special cases, with Theorem 4; the result is strictly more general than these two extreme cases.

\begin{remark}
In the proof of Theorem~\ref{thm:continuous}, we consider a receiver that wishes to estimate the source. Nevertheless, the result is also valid for more general receiver payoff structures. In particular, if there exists a posterior with all nonzero terms under which the receiver is indifferent between all the actions, then Theorem~\ref{thm:continuous} can be applied. Formally, the result of Theorem~\ref{thm:continuous} is still valid if there exists a posterior $Q$ with $Q(x)>0$ for all $x\in\mathcal{X}$ and 
\begin{align}
\sum_{x\in\mathcal{X}}u^r(x,y)Q(x)=\sum_{x\in\mathcal{X}} u^r(x,y')Q(x)
\end{align}
for all $y,y'\in\mathcal{Y}$.
\end{remark}

\begin{remark}
In the proof of Theorem~\ref{thm:continuous}, we apply a linear algebraic argument to prove the result. In particular, this argument reveals that  $u^{s,*}(\cdot)$ is continuous with respect to posterior distributions if and only if the sender's payoff $u^s(\cdot,\cdot)$ is essentially unique up to a multiplicative factor. This multiplicative coefficient determines whether the solution involves a noninformative or fully revealing sender. 
\end{remark}

\subsection{An Algorithmic Procedure for Obtaining Equilibrium Solution}

In the remainder of this section, we investigate conditions under which the equilibrium solution involves a fully revealing, quantized, or randomized sender. Towards that goal, we first present an algorithm to obtain the equilibrium solution. This algorithm computes and compares the expected payoffs under a number of feasible posterior probability combinations. This algorithmic formulation facilitates the investigation of the equilibrium structure. 

\begin{theorem}\label{thm:algorithm}
Let $\mathcal{X}=\mathcal{Y}=\{0,1,\dots,M-1\}$, $u^r(x,y)=1$ for all $x=y$, and $u^r(x,y)=0$ for all $x\neq y$. Then, the Bayesian persuasion solution can be obtained by following the procedure in Algorithm~\ref{alg:BayPers}.
\end{theorem}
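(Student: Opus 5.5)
The plan is to show that the infinite-dimensional problem \eqref{eq:optim_problem} reduces to a finite linear program over a fixed, finite set of candidate posteriors. I would then show that Algorithm~\ref{alg:BayPers}, which enumerates finitely many posterior probability combinations and keeps the best one, solves that linear program exactly. I cannot see the algorithm's text here, so the final matching step assumes it enumerates combinations of the candidate posteriors identified below.

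\textbf{Step 1: decompose posterior space and locate the candidates.} Under $u^r(x,y)=\mathbf{1}\{x=y\}$, \eqref{eq:ystar} says the receiver chooses a mode of $Q$. So the simplex splits into the polytopes $R_y=\{Q : Q(y)\geq Q(x)\ \forall x\}$, $y\in\mathcal{Y}$. By Assumption~\ref{assumption:tiebreaking},
\begin{align}
u^{s,*}(Q)=\max_{y:\,Q\in R_y}\ \ell_y(Q),\qquad \ell_y(Q)\triangleq\sum_{x\in\mathcal{X}}u^s(x,y)Q(x).
\end{align}
Hence $u^{s,*}$ is piecewise linear, and it is upper semicontinuous on each $R_y$. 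A direct computation shows that the vertices of $R_y$ are exactly the uniform distributions $Q_S=\mathbf{1}_S/|S|$ over subsets $S\subseteq\mathcal{X}$ with $y\in S$. The candidate posterior set is therefore the finite family $\{Q_S : \emptyset\neq S\subseteq\mathcal{X}\}$.

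\textbf{Step 2: splitting posteriors into vertices never hurts the sender.} Take any feasible solution of \eqref{eq:optim_problem}. For each induced posterior $Q_m$ with selected action $y_m$, we have $Q_m\in R_{y_m}$, so $Q_m$ is a convex combination of vertices $Q_S$ of $R_{y_m}$ with $y_m\in S$. Replacing $Q_m$ by these vertices, with probabilities scaled by $P(Q_m)$, preserves Bayes plausibility \eqref{eq:bayesplausibility}. At each vertex $Q_S$ the action $y_m$ is still receiver-optimal, although possibly tied with other actions. Sender-preferred tie breaking therefore gives $u^{s,*}(Q_S)\geq \ell_{y_m}(Q_S)$. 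By linearity of $\ell_{y_m}$, the sender's payoff weakly increases. It follows that the optimum is attained using only posteriors from the finite family of Step~1.

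\textbf{Step 3: finite linear program.} With the values $c_S=u^{s,*}(Q_S)$ computed once, the problem becomes
\begin{align}
\max_{w\geq 0}\ \sum_S w_S c_S \quad\text{subject to}\quad \sum_S w_S Q_S=\pi.
\end{align}
This is a linear program with $M$ equality constraints; the normalization $\sum_S w_S=1$ is implied by summing over $x$. The feasible set is nonempty, since full revelation uses singletons, and bounded, so an optimum is attained at a basic feasible solution. Such a solution uses at most $M$ posteriors $Q_S$ that are linearly independent, and its weights are uniquely determined by solving the square linear system. The optimum is therefore found by enumerating the finitely many $M$-subsets of candidates, solving for the weights, discarding infeasible (negative) solutions, and comparing payoffs. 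I would then match this enumeration to the steps of Algorithm~\ref{alg:BayPers}. Finally, each selected combination is realized by a valid encoding policy via $P(M=m\mid X=x)=w_m Q_m(x)/\pi_x$.

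\textbf{Main obstacle.} I expect Step~2 to be the delicate part. I must check that at a vertex where several actions tie, the sender-preferred choice is at least $\ell_{y_m}$, and this must stay consistent with the paper's fixed relabeling convention for equal priors. The other risk is the final matching step, since the exact form of Algorithm~\ref{alg:BayPers} is not visible here.
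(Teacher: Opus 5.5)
Your proposal is correct and follows essentially the same route as the paper. The critical set $\mathcal{Q}$ is the vertex set of the receiver's convex decision regions $R_y$. Splitting each induced posterior into these vertices weakly helps the sender under sender-preferred tie-breaking. The paper's explicit null-space decomposition of $\boldsymbol{\alpha}$ into linearly independent supports is just a hands-on proof of your linear-program basic-feasible-solution step.

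Algorithm~\ref{alg:BayPers} does enumerate linearly independent $M$-tuples from $\{Q_S\}$ with weights in $[0,1]$. The only bookkeeping you need to add is to pad a basic solution that uses fewer than $M$ posteriors with unit vectors $e_i$ at zero weight. The equal-prior relabeling convention you worry about is simply sender-preferred tie-breaking restated, so it causes no conflict.
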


\begin{proof}
See Appendix~\ref{proof:algorithm}.
\end{proof}

\begin{algorithm}
\caption{An algorithm to obtain the Bayesian persuasion solution.}
\label{alg:BayPers}
\begin{algorithmic}
\STATE $U^{s,*} = -\infty$
\STATE $\mathcal{Q} = \{[1,0,\dots,0]^T,\dots,[0,\dots,0,1]^T\} \triangleq \{e_0,\dots,e_{M-1}\}$
\FOR{each $\mathcal{A} \subseteq \mathcal{X}$ with $|\mathcal{A}|=K\geq 2$} 
\STATE Let $Q(x) = 1/K$ for all $x\in\mathcal{A}$ and $Q(x)=0$ for all $x\notin\mathcal{A}$.
\STATE $\mathcal{Q} \gets \mathcal{Q} \cup \{Q\}$
\ENDFOR
\FOR{each linearly independent $\{Q_i\}_{i=0}^{M-1}$ with $Q_i\in \mathcal{Q}$}
\STATE Check if one can find $\{\alpha_i\}_{i=0}^{M-1}$ satisfying $0\leq \alpha_i\leq 1$, $\sum_{i=0}^{M-1} \alpha_i=1$ and
\begin{align}
\pi_x = \sum_{i=0}^{M-1} \alpha_i Q_i(x) \quad\text{for all } x\in\mathcal{X}.
\label{eq:algo_eq3}
\end{align}
\STATE For feasible $\{Q_i\}_{i=0}^{M-1}$, compute
\begin{align}
U^{s}=
\sum_{i=0}^{M-1} 
\alpha_i u^{s,*} (Q_i).
\label{eq:algo_eq4}
\end{align}
\IF{$U^{s,*}<U^{s}$}
\STATE $U^{s,*} \gets U^{s}$
\ENDIF
\ENDFOR
\end{algorithmic}
\end{algorithm}

\begin{remark}
In the first step of Algorithm~\ref{alg:BayPers}, we form a posterior set $\mathcal{Q}$, which contains all the posterior distributions that may appear in an equilibrium solution. In other words, Theorem~\ref{thm:algorithm} shows that there is no need to consider any other posterior distribution while obtaining the equilibrium solution. In the remainder of the manuscript, the set $\mathcal{Q}$ is referred to as the {\it critical posterior set}. 
\end{remark}

\begin{remark}
The result of Theorem~\ref{thm:algorithm} generalizes to arbitrary receiver payoff structures. In this case, the critical posterior set $\mathcal{Q}$ needs to be modified. In particular, for all $A\subseteq\mathcal{X}$, it is required to find the posterior(s) $Q$ under which the receiver is indifferent between the actions corresponding to the sources in $A$ with $Q(x)=0$ for all $x\notin A$. 
\end{remark}

\begin{remark}
In addition to $e_i$'s, the critical posterior set $\mathcal{Q}$ contains posteriors under which the receiver is indifferent between multiple actions. By inducing such posteriors, the sender essentially confuses the receiver among multiple actions. This can be interpreted as misinformation since the sender distorts its source observations with the purpose of inducing a mismatched action for certain source realizations. In addition, for a given action combination $\mathcal{A}\subset \mathcal{Y}$, there exists a family of posteriors inducing equal receiver preferences for these actions. Nevertheless, we show that it is sufficient to consider only the unique posterior distribution under which the receiver is certain that the source satisfies $X\in \mathcal{A}$ when the corresponding message is received. In other words, the sender does not benefit by confusing the receiver with an additional source realization unless its posterior probability is equal to the rest of the sources in $\mathcal{A}$.
\end{remark}

Theorem~\ref{thm:algorithm} describes a procedure to obtain the equilibrium solution where one first checks feasibility for a number of posterior combinations and then computes the payoffs for feasible ones. It is seen that the number of posterior combinations may become large when the cardinality of the source/action set is large. However, by taking the Bayesian plausibility condition into account during the selection of posterior combinations in Algorithm~\ref{alg:BayPers}, a reduced set can be obtained. 

\begin{remark}\label{rem:reduction}
We know that the actions $y=0$, $y=1$, $\dots$, $y=M-1$ are sorted in descending order of preference from the receiver's perspective under the prior distribution. This implies that a feasible posterior combination must include a posterior under which the receiver prefers $y=i$ over $y=j$ for all $i$ and $j$ satisfying $0\leq i< j \leq M-1$. 
\end{remark}

\begin{remark}
One may need to consider a large number of posterior probability combinations even after applying the reduction in Remark~\ref{rem:reduction} when the cardinality of the source/action set is large. However, if one restricts attention to deterministic policies, the size of the feasible posterior combination set becomes significantly small. In other words, the difficulty of solving the Bayesian persuasion problem mainly arises when we allow randomized policies at the sender. This motivates us from an analytical perspective to investigate conditions under which the equilibrium solution involves a randomized sender.  
\end{remark}

\subsection{Explicit Conditions for the Optimality of Fully Revealing, Randomized, Quantized, or Noninformative Signaling}

In this section, we also consider a receiver that wishes to estimate the source and write explicit conditions under which the equilibrium solution involves a fully revealing, noninformative, quantized, or randomized sender. Towards that goal, we first derive conditions ensuring that the equilibrium solution involves a fully revealing, noninformative, or quantized sender under a deterministic policy restriction. Then, under these separate conditions, we investigate whether a randomized policy improves the sender's payoff. In the following, we first investigate the optimality of a fully revealing policy.

\begin{lemma}\label{lem:fully_revealing_lemma}
The optimal sender under a deterministic policy restriction is fully revealing if and only if the following inequalities hold:
\begin{align}
u^s(i,i)\geq u^s(i,j) 
\label{eq:lem_fully_reveal_eq1}
\end{align}
for all $i,j\in\{0,1,\dots,M-1\}$ satisfying $\pi_j>\pi_i$, and 
\begin{align}
u^s(i,i) \geq u^s(i,j)\text{ and }u^s(j,j) \geq u^s(j,i)
\label{eq:lem_fully_reveal_eq2}
\end{align}
for all $i,j\in\{0,1,\dots,M-1\}$ satisfying $\pi_i=\pi_j$.
\end{lemma}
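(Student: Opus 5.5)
Under a deterministic restriction, a policy is a partition of $\mathcal{X}$ into cells $C_1,\dots,C_L$. The message for cell $C$ induces the posterior $\pi_x/\pi(C)$ on $C$, where $\pi(C)=\sum_{x\in C}\pi_x$. The receiver then picks the most likely source in $C$, i.e. the smallest index. On ties in prior probability, it follows the tie-breaking convention, which under the convention stated earlier also means the smallest index. The sender's payoff is therefore
\[
\sum_{\ell} \sum_{x\in C_\ell}\pi_x u^s(x,y_\ell),
\]
where $y_\ell$ is the induced action, while fully revealing yields $\sum_x \pi_x u^s(x,x)$. For the "if" direction, I will show that splitting any cell into singletons weakly increases the payoff. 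Fix a cell $C$ with receiver action $j\in C$. Every $i\in C$ with $i\neq j$ satisfies either $\pi_j>\pi_i$, or $\pi_j=\pi_i$ with $j$ chosen. In both cases \eqref{eq:lem_fully_reveal_eq1} or \eqref{eq:lem_fully_reveal_eq2} gives $u^s(i,i)\geq u^s(i,j)$, and trivially $u^s(j,j)=u^s(j,j)$. Summing $\pi_i u^s(i,i)\geq \pi_i u^s(i,j)$ over $i\in C$ shows that refining $C$ into singletons does not hurt. Hence the fully revealing policy is optimal among deterministic policies.

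For the "only if" direction, I argue by contrapositive. Suppose some pair violates the conditions: either $\pi_j>\pi_i$ and $u^s(i,i)<u^s(i,j)$, or $\pi_i=\pi_j$ and one of the two inequalities in \eqref{eq:lem_fully_reveal_eq2} fails. Consider the policy that pools exactly $\{i,j\}$ into one message and reveals all other sources.

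In the first case, the receiver picks $j$ for the pooled message. Compared with fully revealing, the payoff changes by $\pi_i(u^s(i,j)-u^s(i,i))>0$, so full revelation is strictly beaten.

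In the equal-prior case, the receiver is indifferent between $i$ and $j$ and breaks the tie in the sender's favor. The pooled payoff is $\pi_i\max\{u^s(i,i)+u^s(j,i),\,u^s(i,j)+u^s(j,j)\}$. If, say, $u^s(i,j)>u^s(i,i)$, then the second term already exceeds $u^s(i,i)+u^s(j,j)$, the fully revealing value on this pair. The symmetric case is identical. Thus pooling strictly improves, and no fully revealing policy is optimal.

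The only delicate point is the interpretation of "the optimal sender is fully revealing" when there are ties in the sender's payoff. I read the claim as: a fully revealing policy attains the deterministic optimum, and if the conditions fail it strictly does not. With the weak inequalities this reading matches the statement exactly.

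The main care needed is in the equal-prior case. There I must use the sender-preferred tie breaking of Assumption~\ref{assumption:tiebreaking}, rather than the fixed labeling convention, to get a strict improvement. I also need to check that in the "if" direction the receiver's selected action within a cell is always an index $j$ for which the hypotheses apply to every other member of the cell. This holds because the receiver's choice is the maximal-prior element, and any tie is resolved in favor of the sender, which only increases the pooled payoff. Since that pooled payoff is still bounded by the split payoff via \eqref{eq:lem_fully_reveal_eq2}, the tie resolution does not break the argument.
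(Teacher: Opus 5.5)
Your proof is correct and follows essentially the paper's argument: necessity comes from pooling only the offending pair $\{i,j\}$, using sender-preferred tie-breaking when $\pi_i=\pi_j$, and sufficiency comes from the fact that a deterministic cell can only induce a maximal-prior action. Your per-cell bound $\sum_{i\in C}\pi_i u^s(i,j)\le\sum_{i\in C}\pi_i u^s(i,i)$ makes rigorous the paper's informal claim that any other quantization only does worse; note that, as in the paper, the strict improvement in the ``only if'' direction tacitly requires $\pi_i>0$.
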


\begin{proof}
See Appendix~\ref{proof:fully_revealing_lemma}
\end{proof}

\begin{remark}
With a fully revealing encoding policy, the receiver's action is equal to the source observation at the sender. For the optimality of a fully revealing policy under a deterministic policy restriction, it is required that the sender does not wish to induce a different action $y=j$ for a source observation of $x=i$ only when $\pi_j\geq \pi_i$ holds. On the other hand, for $\pi_j< \pi_i$, such conditions do not appear in Lemma~\ref{lem:fully_revealing_lemma} since the sender cannot convince the receiver to take the action $y=j$ under a source observation of $x=i$ with a deterministic encoding policy in this case.
\end{remark}

Next, we assume that the conditions of Lemma~\ref{lem:fully_revealing_lemma} are satisfied and show that the sender's payoff improves via randomization at the sender under certain conditions. That is, misinformation via mixing source observations outperforms truthful source revelation under certain conditions.

\begin{theorem}\label{thm:fully_revealing}
Suppose that the optimal sender among deterministic policies is fully revealing. Then, the equilibrium solution involves a randomized encoder if and only if applying pairwise randomization improves the sender's payoff for at least one pair of source values. 
\end{theorem}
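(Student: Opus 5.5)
The plan is to work entirely within the critical posterior set $\mathcal{Q}$ of Theorem~\ref{thm:algorithm}. First I will make ``pairwise randomization'' precise. For a pair $i<j$ (so $\pi_i\geq\pi_j$), the sender keeps the fully revealing mapping for every source value except $x=i$. A fraction of the mass of $x=i$ is mixed with all of the mass of $x=j$, so that the resulting message induces the posterior $Q_{ij}$ with $Q_{ij}(i)=Q_{ij}(j)=1/2$. The remaining mass of $x=i$ still induces $e_i$. Bayes plausibility holds with weights $2\pi_j$ on $Q_{ij}$, $\pi_i-\pi_j$ on $e_i$, and $\pi_k$ on $e_k$ for $k\neq i,j$. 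Since $u^{s,*}(Q_{ij})=\tfrac12\max\{u^s(i,i)+u^s(j,i),\,u^s(i,j)+u^s(j,j)\}$ under Assumption~\ref{assumption:tiebreaking}, the payoff gain over full revelation is
\begin{align}
2\pi_j\,u^{s,*}(Q_{ij})-\pi_j u^s(i,i)-\pi_j u^s(j,j)=\pi_j\big(u^s(j,i)-u^s(j,j)\big)
\end{align}
when action $i$ is sender-preferred, and $\pi_j\big(u^s(i,j)-u^s(i,i)\big)$ otherwise. Under Lemma~\ref{lem:fully_revealing_lemma} the second expression is nonpositive whenever $\pi_j\geq\pi_i$. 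So the pairwise condition reduces to $u^s(j,i)>u^s(j,j)$ for some pair with $\pi_i\geq\pi_j$ (with the natural strict refinement when $\pi_i>\pi_j$ permits action $j$ only as a tie).

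The ``if'' direction is then immediate. If some pairwise randomization strictly improves the payoff, then the optimum strictly exceeds the best deterministic payoff, which by hypothesis is attained by full revelation. Hence no deterministic policy, nor anything equivalent to one in the sense of the remark after Definition~\ref{def:randomized}, is optimal, and the equilibrium encoder must be randomized.

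For the ``only if'' direction I will argue by contraposition. Suppose no pair yields an improvement, i.e.\ $u^s(j,i)\leq u^s(j,j)$ for all pairs $i<j$. Together with the conditions of Lemma~\ref{lem:fully_revealing_lemma} this gives $u^s(x,y)\leq u^s(x,x)$ for every $x$ and every $y$. By Theorem~\ref{thm:algorithm}, any candidate optimum is a Bayes-plausible mixture $\sum_k\alpha_k Q_k$ over $Q_k\in\mathcal{Q}$. For a uniform posterior $Q$ on a set $\mathcal{A}$ with $|\mathcal{A}|=K$ and induced action $y\in\mathcal{A}$,
\begin{align}
u^{s,*}(Q)=\frac1K\sum_{x\in\mathcal{A}}u^s(x,y)\leq\frac1K\sum_{x\in\mathcal{A}}u^s(x,x)=\sum_{x}Q(x)\,u^s(x,x).
\end{align}
The same bound holds trivially for $e_x$. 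Summing with weights $\alpha_k$ and using Bayes plausibility gives total payoff at most $\sum_x\pi_x u^s(x,x)$, which is the fully revealing payoff. Hence randomization cannot strictly improve, and full revelation, a deterministic policy, remains an equilibrium solution.

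The main obstacle is the bookkeeping in the first step, namely pinning down which tie-break action the receiver takes at $Q_{ij}$ and handling the equal-prior case. This has to match exactly the convention stated before Theorem~\ref{thm:continuous} and the inequalities of Lemma~\ref{lem:fully_revealing_lemma}, so that ``pairwise improvement fails'' translates exactly into $u^s(x,y)\leq u^s(x,x)$ for all $x,y$. I would handle the cases $\pi_i>\pi_j$ and $\pi_i=\pi_j$ separately. In the equal-prior case the pure swap is already deterministic, so Lemma~\ref{lem:fully_revealing_lemma} alone supplies both inequalities.
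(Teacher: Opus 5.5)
Your overall route is the paper's: reduce to pairs, extract $u^s(x,y)\le u^s(x,x)$ for all $x,y$, then bound every induced posterior by $\sum_x Q(x)u^s(x,x)$. However, the extraction step is wrong, and it is the crux of the argument.

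With $i<j$ (so $\pi_i\ge\pi_j$), Lemma~\ref{lem:fully_revealing_lemma} applied to source $j$ and action $i$ already gives $u^s(j,i)\le u^s(j,j)$. So your reduced criterion $u^s(j,i)>u^s(j,j)$ can never hold under the theorem's hypothesis, and your contrapositive assumption ``$u^s(j,i)\le u^s(j,j)$ for all $i<j$'' carries no information. The branch you discarded is the one that matters. When $\pi_i>\pi_j$ and the sender-preferred tie-break at $Q_{ij}$ is action $j$, the gain is $\pi_j\bigl(u^s(i,j)-u^s(i,i)\bigr)$. Lemma~\ref{lem:fully_revealing_lemma} says nothing about this difference: your remark covers only $\pi_j\ge\pi_i$, which for $i<j$ means equal priors. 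This is exactly the mechanism the remark after the theorem points to: randomization lets the sender induce the lower-prior action $j$ on the higher-prior source $i$, which no deterministic policy can do.

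Hence the step ``together with Lemma~\ref{lem:fully_revealing_lemma} this gives $u^s(x,y)\le u^s(x,x)$ for every $x,y$'' fails. As written, your argument would show that randomization never helps whenever full revelation is deterministic-optimal. That contradicts Corollary~\ref{cor:binary}(iii). Take $\pi_0>\pi_1$, $u^s(1,1)>u^s(1,0)$ and $u^s(0,1)>u^s(0,0)$. Lemma~\ref{lem:fully_revealing_lemma} holds, your reduced criterion fails, yet inducing $Q_{01}$ gains $\pi_1\bigl(u^s(0,1)-u^s(0,0)\bigr)>0$.

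The repair is to read your own gain formula correctly. The gain equals $\pi_j\max\{u^s(j,i)-u^s(j,j),\,u^s(i,j)-u^s(i,i)\}$. So, for $\pi_j>0$, the absence of a pairwise improvement is precisely the paper's inequality \eqref{eq:thm_fully_reveal_eq1}. That inequality yields both $u^s(i,j)\le u^s(i,i)$ and $u^s(j,i)\le u^s(j,j)$ directly, with no appeal to Lemma~\ref{lem:fully_revealing_lemma}; this is how the paper proceeds. With that fix, your final bound and Bayes-plausibility summation coincide with the paper's last step. That bound in fact holds for every posterior, so Theorem~\ref{thm:algorithm} is not even needed there. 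Your ``if'' direction is fine as stated.
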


\begin{proof}
See Appendix~\ref{proof:fully_revealing_thm}.
\end{proof}

\begin{remark}
Theorem~\ref{thm:fully_revealing} reveals that when the optimal encoder among deterministic policies is fully revealing, it is enough to apply Theorem~\ref{thm:algorithm} for each source pairs to see whether randomization is needed for the equilibrium solution. In other words, for each source pair, the sender does not gain by confusing the receiver among these sources. However, when the equilibrium solution involves a randomized sender, the solution may involve a sender that maps three or more source realizations to the same message with certain nonzero probabilities.     
\end{remark}

\begin{remark}
The proof of Theorem~\ref{thm:fully_revealing} reveals that a fully revealing encoding policy is optimal among randomized policies when $u^s(i,i)\geq u^s(i,j)$ holds for all $i,j\in\{0,1,\dots,M-1\}$. A similar condition exists in Lemma~\ref{lem:fully_revealing_lemma} with an additional requirement on prior probabilities. The difference arises from the fact that the sender may convince the receiver to take the action $y=j$ under a source observation of $x=i$ with a randomized encoding policy regardless of the prior probabilities.
\end{remark}

Then, we consider noninformative encoding policies. 

\begin{lemma}\label{lem:noninf_lemma}
The optimal sender under a deterministic policy restriction is noninformative if and only if the following inequalities are satisfied:
\begin{align}
u^s(i,0)\geq u^s(i,i),
\label{eq:noninf_lemma_eq1}
\end{align}
for all $i\in\{1,2,\dots,M-1\}$, and
\begin{align}
\pi_iu^s(i,0)+\pi_ju^s(j,0)\geq \pi_iu^s(i,i)+\pi_ju^s(j,i)
\label{eq:noninf_lemma_eq2}
\end{align}
for all $i,j\in\{1,2,\dots,M-1\}$ satisfying $i<j$.
\end{lemma}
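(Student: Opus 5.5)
The plan is to reduce the problem to a comparison of cell payoffs over partitions. Under a deterministic policy restriction, an encoding policy is (up to the equivalence noted after Definition~\ref{def:randomized}) a partition $\{S_1,\dots,S_L\}$ of $\mathcal{X}$. A message associated with cell $S$ induces the posterior $Q_S(x)=\pi_x/\sum_{x'\in S}\pi_{x'}$ for $x\in S$. Under $u^r(x,y)=\mathbf{1}\{x=y\}$, the receiver picks the most likely element of $S$. With the ordering $\pi_0\geq\cdots\geq\pi_{M-1}$ and the tie-breaking convention fixed earlier, this is $\min S$. Hence the sender's payoff is $\sum_{\ell}\sum_{x\in S_\ell}\pi_x u^s(x,\min S_\ell)$, while the noninformative payoff is $\sum_{x}\pi_x u^s(x,0)$. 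The cell containing $0$ uses action $0$ in both cases. The comparison therefore reduces to showing, for every cell $S$ with $k=\min S\geq 1$,
\begin{align}
\Delta(S)\triangleq\sum_{x\in S}\pi_x\bigl(u^s(x,k)-u^s(x,0)\bigr)\leq 0 .
\end{align}
Noninformative signaling is optimal among deterministic policies iff $\Delta(S)\leq 0$ for all such $S$.

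\textbf{Necessity.} I would test two specific partitions.
\begin{itemize}
\item For $i\geq 1$, take $\{\{i\},\mathcal{X}\setminus\{i\}\}$. The big cell still contains $0$, so $\Delta(\{i\})=\pi_i(u^s(i,i)-u^s(i,0))\leq 0$. Since $\pi_i>0$ may be assumed (otherwise $i$ is irrelevant), this gives \eqref{eq:noninf_lemma_eq1}.
\item For $1\leq i<j$, take $\{\{i,j\},\mathcal{X}\setminus\{i,j\}\}$. The receiver plays $i$ on the pair, so $\Delta(\{i,j\})\leq 0$, which is exactly \eqref{eq:noninf_lemma_eq2}.
\end{itemize}

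\textbf{Sufficiency.} This is the main obstacle. For a cell $S$ with minimum $k$ and $n=|S|-1$ other elements $j>k$, set $c=\pi_k(u^s(k,0)-u^s(k,k))\geq 0$, which is nonnegative by \eqref{eq:noninf_lemma_eq1}. Condition \eqref{eq:noninf_lemma_eq2} gives $\pi_j(u^s(j,k)-u^s(j,0))\leq c$ for each such $j$. Summing yields only $\Delta(S)\leq (n-1)c$. This settles $|S|\leq 2$ but not larger cells, since the single loss term $c$ of the pivot $k$ is used once while each partner may recoup up to $c$.

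I would first try to close this gap by splitting $S$ into pairs and singletons, comparing $\Delta(S)$ with $\sum\Delta$ over a refinement. However, the pivot's loss is counted once in $S$ but once per pair in the refinement, so this goes in the wrong direction. In fact, a quick check suggests a potential counterexample: take $c=1$ and two partners each with $\pi_j(u^s(j,k)-u^s(j,0))=1$. Then both pairwise inequalities hold with equality, yet $\Delta(\{k,j,l\})=1>0$.

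So the key step in writing the proof will be to verify whether the pairwise conditions really suffice. If they do not, I would strengthen \eqref{eq:noninf_lemma_eq2} to the cell-wise condition
\begin{align}
\pi_k u^s(k,0)+\sum_{j\in B}\pi_j u^s(j,0)\geq \pi_k u^s(k,k)+\sum_{j\in B}\pi_j u^s(j,k)
\end{align}
for every $k\geq 1$ and every $B\subseteq\{k+1,\dots,M-1\}$. With this condition, necessity and sufficiency both follow immediately from the cell reduction above.
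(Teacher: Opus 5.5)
Your necessity argument is exactly the paper's: you test the two-cell partitions $\{\{i\},\mathcal{X}\setminus\{i\}\}$ and $\{\{i,j\},\mathcal{X}\setminus\{i,j\}\}$, with your caveat $\pi_i>0$ for \eqref{eq:noninf_lemma_eq1}. For sufficiency, the paper offers only the sentence that \eqref{eq:noninf_lemma_eq1}--\eqref{eq:noninf_lemma_eq2} ``can be applied iteratively for any other quantization policy.'' Your diagnosis is correct. The pivot's loss $\pi_k(u^s(k,0)-u^s(k,k))$ is paid once per cell, but each pairwise inequality lets a different partner recoup all of it, so that iteration cannot work.

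Your candidate counterexample is genuine once the remaining conditions are checked. Take $M=4$, $\pi=(0.4,0.3,0.2,0.1)$, $u^s(1,0)=1$, $u^s(2,1)=1$, $u^s(3,1)=2$, $u^s(2,2)=u^s(3,3)=-1$, and $u^s(x,y)=0$ otherwise. Then \eqref{eq:noninf_lemma_eq1} reads $1\geq 0$, $0\geq -1$, $0\geq -1$. Condition \eqref{eq:noninf_lemma_eq2} reads $0.3\geq 0.2$, $0.3\geq 0.2$, $0\geq -0.2$ for the pairs $(1,2)$, $(1,3)$, $(2,3)$, all strictly. Yet the noninformative payoff is $0.3$. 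The partition $\{\{0\},\{1,2,3\}\}$ gives its second cell the posterior $(0,1/2,1/3,1/6)$, so the receiver plays $y=1$, and the sender gets $0.2\cdot 1+0.1\cdot 2=0.4$. So the ``if'' direction is false for every $M\geq 4$ (pad with low-probability, zero-payoff sources), and the statement as written cannot be proved. The gap lies in the paper's sufficiency step, not in your plan.

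Your cell-wise condition is the correct replacement, and your reduction proves it in both directions. With strictly ordered priors, the gain of any partition over the noninformative policy is $\sum_{\ell:\min S_\ell\geq 1}\Delta(S_\ell)$. Each $\Delta(S)$ is realized by the two-cell partition $\{S,\mathcal{X}\setminus S\}$. The cell-wise condition coincides with \eqref{eq:noninf_lemma_eq1}--\eqref{eq:noninf_lemma_eq2} exactly when $M\leq 3$, so the binary and ternary examples are unaffected. The proof of Theorem~\ref{thm:noninf} invokes only the (valid) necessity of \eqref{eq:noninf_lemma_eq1}. However, conditions (ii) and (iv) of Lemma~\ref{lem:quantized} use the same pairs-only test and are open to the same objection.

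One thing to tighten: when priors tie, the paper's pairwise tie convention does not pin down the receiver's action in cells of three or more elements (including $\mathcal{X}$ itself). So either assume $\pi_0>\pi_1>\cdots>\pi_{M-1}$, or define the cell action as the sender-preferred maximizer of $\pi_x$ over $S$ and define $\Delta$ accordingly.
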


\begin{proof}
See Appendix~\ref{proof:noninf_lemma}.
\end{proof}

\begin{remark}
Under a noninformative policy, the receiver takes the action $y=0$ for all source observations at the sender. The optimality of such a policy among deterministic policies requires that for a source observation of $x=i$, the sender does not wish to replace the corresponding action with $y=i$ for all $i\neq 0$, which is feasible by revealing the corresponding source. In addition, the sender does not wish to construct a separate bin for sources $x=i$ and $x=j$ with $\pi_i\geq \pi_j$ to induce $y=i$ instead of $y=0$ for these sources. 
\end{remark}

\begin{theorem}\label{thm:noninf}
Suppose that the optimal sender under a deterministic policy restriction is noninformative with $\pi_0>\pi_i$ for all $i\in\{1,2,\dots,M-1\}$. Then, the equilibrium solution involves a randomized encoder if and only if there exists a critical posterior $Q$ that satisfies 
\begin{align}
\sum_{x\in\mathcal{X}}u^s(x,0)Q(x) < u^{s,*}(Q), 
\label{eq:noninf_thm_cond}
\end{align}
where $Q$ induces equal receiver preferences for at least two distinct actions. 
\end{theorem}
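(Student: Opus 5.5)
We reduce the statement to Theorem~\ref{thm:algorithm}. That theorem says the Bayesian persuasion value equals the maximum of $\sum_i \alpha_i u^{s,*}(Q_i)$ over Bayes-plausible combinations of critical posteriors. The critical posterior set $\mathcal{Q}$ contains the degenerate posteriors $e_x$, together with the uniform posteriors $Q_{\mathcal{A}}$ on subsets $\mathcal{A}$ with $|\mathcal{A}|\geq 2$. These uniform posteriors are exactly the critical posteriors that induce equal receiver preferences for at least two actions. Let $V_0=\sum_x \pi_x u^s(x,0)$ denote the noninformative payoff. Since $\pi_0>\pi_i$ for all $i\neq 0$, the prior itself makes the receiver choose $y=0$ strictly. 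Define the linear function $L(Q)=\sum_x u^s(x,0)Q(x)$. Bayes plausibility then gives $\sum_i\alpha_i L(Q_i)=L(\pi)=V_0$ for every feasible combination. Hence
\begin{align*}
\sum_i \alpha_i u^{s,*}(Q_i) - V_0=\sum_i\alpha_i\bigl(u^{s,*}(Q_i)-L(Q_i)\bigr),
\end{align*}
and the question is whether some feasible combination makes the right-hand side positive while involving a randomizing posterior.

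\emph{Necessity.} Suppose that for every critical $Q$ inducing ties we have $u^{s,*}(Q)\leq L(Q)$, and suppose the equilibrium is randomized. A randomized policy can be taken to use at least one critical posterior $Q_{\mathcal{A}}$ with $|\mathcal{A}|\ge 2$ and positive weight. The remaining posteriors are degenerate $e_x$, for which $u^{s,*}(e_x)-L(e_x)=u^s(x,x)-u^s(x,0)\le 0$ by \eqref{eq:noninf_lemma_eq1}. So every term in the sum above is nonpositive, and the value is at most $V_0$. Thus the noninformative deterministic policy weakly dominates, and no strict randomization gain exists. We should phrase ``involves a randomized encoder'' as strict improvement over the deterministic optimum, which by assumption is the noninformative one.

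\emph{Sufficiency.} Suppose some $Q_{\mathcal{A}}$ with $|\mathcal{A}|=K\ge2$ satisfies \eqref{eq:noninf_thm_cond}. We construct a feasible combination that places a small weight $\epsilon$ on $Q_{\mathcal{A}}$ and leaves the rest as a perturbation of the prior. Concretely, write $\pi=\epsilon Q_{\mathcal{A}}+(1-\epsilon)\pi_\epsilon$ with $\pi_\epsilon=(\pi-\epsilon Q_{\mathcal{A}})/(1-\epsilon)$. For small $\epsilon$, $\pi_\epsilon$ is a valid distribution. Because $\pi_0>\pi_i$ strictly, the receiver still strictly prefers $y=0$ under $\pi_\epsilon$, so $u^{s,*}(\pi_\epsilon)=L(\pi_\epsilon)$. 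The payoff is then $V_0+\epsilon\,(u^{s,*}(Q_{\mathcal{A}})-L(Q_{\mathcal{A}}))>V_0$. The resulting encoder mixes distinct sources in $\mathcal{A}$ into one message, and $\pi_\epsilon$ has full support. It is therefore randomized in the sense of Definition~\ref{def:randomized}: some $x\in\mathcal{A}$ is sent with positive probability both to the $Q_{\mathcal{A}}$ message and to the $\pi_\epsilon$ message, which also receives other sources. Since this strictly beats the best deterministic policy, the equilibrium must be randomized. Note that this perturbation is not itself a combination of critical posteriors. That causes no problem, since it only exhibits a lower bound on the optimum.

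The main obstacle is the necessity direction. We must justify that restricting to critical posteriors loses nothing, which is exactly what Theorem~\ref{thm:algorithm} provides. We must also handle the degenerate posteriors correctly using \eqref{eq:noninf_lemma_eq1}: if the equilibrium uses only $e_x$'s and $Q$'s, the sign argument applies termwise. A subtle point is the case where an optimal randomized policy merely ties with $V_0$. We will interpret the statement as strict improvement, or equivalently choose the deterministic solution under ties.
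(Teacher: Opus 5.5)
Your proof is correct. The sufficiency half is the paper's own argument: put weight $\epsilon$ on the profitable critical posterior, note that the residual $(\pi-\epsilon Q)/(1-\epsilon)$ still induces $y=0$ because $\pi_0>\pi_i$, and read off a strict gain. Your explicit check against Definition~\ref{def:randomized} is a small addition the paper leaves implicit.

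The necessity half reaches the same conclusion by a cleaner route. The paper first argues separately that ties including $0$ resolve to $y=0$. It then writes an arbitrary policy as tie-inducing critical posteriors with $0$ outside the tie set, degenerate $e_i$'s, and one residual posterior inducing $y=0$. Finally it merges these into the residual one at a time, checking at each step (via $\pi_0>\pi_i$) that the merged posterior still induces $y=0$ and that the payoff does not decrease.

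You instead use the single identity $\sum_i\alpha_i L(Q_i)=L(\pi)=V_0$ for the affine functional $L(Q)=\sum_x u^s(x,0)Q(x)$. The gain over $V_0$ is then $\sum_i\alpha_i\bigl(u^{s,*}(Q_i)-L(Q_i)\bigr)$, with every term nonpositive. This is the supporting-hyperplane form of concavification. $L$ dominates $u^{s,*}$ on the critical set, and hence everywhere by the decomposition behind Theorem~\ref{thm:algorithm}; it also touches $u^{s,*}$ at $\pi$. So you need $\pi_0>\pi_i$ only at the prior itself, and all tie sets, including those containing $0$, are handled uniformly. The paper's merging is more constructive, since it exhibits a payoff-nondecreasing path from any policy to the noninformative one, but it rests on the same inequality.

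One tidy-up: your claim that a randomized policy uses one $Q_{\mathcal{A}}$ and otherwise only degenerate posteriors is inaccurate and not needed. What your computation actually shows, and what you should state, is that every Bayes-plausible combination of critical posteriors has value at most $V_0$.
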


\begin{proof}
See Appendix~\ref{proof:noninf_thm}.
\end{proof}

Theorem~\ref{thm:noninf} demonstrates that even though the sender completely hides the source under a deterministic policy restriction, certain conditions convince the sender to reveal information related to the source when randomized policies are allowed. In other words, under these conditions, conveying distorted information is better than giving no information for the sender.

\begin{remark}
With a randomized policy, the sender is able to mix any source combination to confuse the receiver. Theorem~\ref{thm:noninf} shows that for the optimality of a noninformative policy among randomized policies, it is required that the sender does not gain by using such source mixtures to induce an action other than $y=0$ for any source combination. 
\end{remark}

\begin{lemma}\label{lem:quantized}
Suppose that $\pi_0>\pi_1>\dots>\pi_{M-1}$ holds. Then, the optimal sender under a deterministic policy restriction is a quantization policy with quantization bins $B_0,B_1,\dots,B_{n-1}$ and respective decisions $x_0,x_1,\dots,x_{n-1}$ for these bins if and only if the following hold:
\begin{enumerate}
\item [(i)] For all $i\in\{0,1,\dots,n-1\}$, we have that 
\begin{align}
u^s(x,x_i)\geq u^s(x,x)
\label{eq:quantized_lemma_eq1}
\end{align}
holds for all $x\in B_i$ with $x\neq x_i$. 
\item [(ii)] For all $i\in\{0,1,\dots,n-1\}$, we have that 
\begin{align}
\pi_xu^s(x,x_i) + \pi_y u^s(y,x_i) 
\geq 
\pi_x u^s(x,x)+ \pi_y u^s(y,x)
\label{eq:quantized_lemma_eq2}
\end{align}
is satisfied for all $x,y\in B_i$ with $x\neq x_i$, $y\neq x_i$ and $x<y$.
\item [(iii)] For all $i,j\in\{0,1,\dots,n-1\}$ with $i\neq j$, we have that 
\begin{align}
u^s(x,x_i) \geq u^s(x,x_j)
\label{eq:quantized_lemma_eq3}
\end{align}
holds for all $x\in B_i$ with $x\neq x_i$ and $x_j<x$.
\item [(iv)] For all $i,j\in\{0,1,\dots,n-1\}$ with $i\neq j$, the inequality
\begin{align}
\pi_x u^s(x,x_i) + \pi_yu^s(y,x_j) 
\geq 
\pi_x u^s(x,x) + \pi_yu^s(y,x) 
\label{eq:quantized_lemma_eq4}
\end{align}
is satisfied for all $x\in B_i$ and $y\in B_j$ with $x\neq x_i$, $y\neq x_j$ and $x<y$.
\item [(v)] For all $i,j\in\{0,1,\dots,n-1\}$ with $i\neq j$ and $x_i<x_j$, the optimal deterministic policy considering the sources $\{x_i\}\cup B_j$ is a quantization policy with bins $\{x_i\}$ and $B_j$. 
\end{enumerate}
\end{lemma}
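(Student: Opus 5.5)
We reduce the claim to a comparison of finitely many deterministic partitions. Under a deterministic policy, each message corresponds to a bin $B\subseteq\mathcal{X}$, and the posterior is $\pi$ restricted to $B$ and renormalized. Because $\pi_0>\pi_1>\dots>\pi_{M-1}$, the receiver facing bin $B$ takes action $y=\min B$, the most likely element. The sender's payoff is therefore additive over bins:
\begin{align*}
\sum_{B}\sum_{x\in B}\pi_x u^s(x,\min B).
\end{align*}
Finding the optimal deterministic policy thus amounts to maximizing this sum over all partitions of $\mathcal{X}$. The lemma states that the partition $\{B_i\}$, with decisions $x_i=\min B_i$, is optimal if and only if (i)--(v) hold. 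We prove the two directions separately.

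\emph{Necessity.} Each condition is obtained by comparing the candidate partition with a specific deviation and requiring that the deviation not increase the payoff.
\begin{itemize}
\item (i): split off $x$ from $B_i$ as a singleton. The payoff of $x$ changes from $\pi_xu^s(x,x_i)$ to $\pi_xu^s(x,x)$, and every other contribution is unchanged.
\item (ii): move $x<y$ from $B_i$ into a new bin $\{x,y\}$, whose decision is $x$.
\item (iii): move $x$ into $B_j$ with $x_j<x$, so the decision of $B_j$ remains $x_j$. Moving $x$ into a bin whose minimum exceeds $x$ would change that bin's decision, so no such condition arises.
\item (iv): take $x\in B_i$ and $y\in B_j$ with $x<y$, and form the new bin $\{x,y\}$ with decision $x$.
\item (v): merge $x_i$ into $B_j$ (this changes $B_j$'s action to $x_i$), and more generally re-optimize the sources $\{x_i\}\cup B_j$ while keeping everything else fixed. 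Optimality of the global partition implies that its restriction to $\{x_i\}\cup B_j$ is optimal for that subproblem, since the payoff is additive over bins.
\end{itemize}
Each deviation is a valid deterministic policy, so necessity is immediate.

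\emph{Sufficiency.} Suppose (i)--(v) hold, and let $\{C_k\}$ be an arbitrary partition. We aim to show, source by source or bin by bin, that its payoff is at most that of $\{B_i\}$. For each $C_k$ with decision $c_k=\min C_k$, we classify its elements by their home bins $B_i$ and bound their contributions:
\begin{itemize}
\item If $c_k$ is itself some decision $x_j$, each $x\in C_k\cap B_i$ has $x>x_j$ and receives $u^s(x,x_j)\le u^s(x,x_i)$ by (iii). When $i=j$ this holds trivially.
\item If $c_k$ is a non-decision element $x\in B_i$, the other members $y$ of $C_k$ receive action $x$. Conditions (ii) and (iv) bound pairs of the form $\pi_xu^s(x,x)+\pi_yu^s(y,x)$. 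Since the payoff of $C_k$ is additive over its members, we need a decomposition of $C_k$ into pairs, plus (i) applied to a singleton. With several members $y_1,\dots,y_r$, a naive pairing uses $\pi_xu^s(x,x)$ $r$ times. We intend to apply (ii)/(iv) to one pair, and for the remaining $y$'s use (i)/(iii) at $y$ together with the relation between $u^s(y,x)$ and $u^s(y,\cdot)$.
\item Decisions $x_i$ that end up in a $C_k$ with a smaller minimum $x_j$ are exactly what condition (v) controls: reassigning $x_i$ to the bin of $x_j$ does not help, by (v) applied to $\{x_i\}\cup B_j$.
\end{itemize}

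The main obstacle is this multi-member bookkeeping. Conditions (i)--(iv) are only pairwise, while a competing bin may combine many sources from different home bins. I would first try an exchange argument, inducting on the number of elements whose bin differs between $\{C_k\}$ and $\{B_i\}$. At each step, one source is moved back to its home bin (or one non-decision minimum is dissolved), and the relevant pairwise inequality shows the payoff weakly increases. If the pairwise conditions prove insufficient for some configuration, condition (v) is the natural place to absorb it, since it already encodes full subproblem optimality for pairs of bins.
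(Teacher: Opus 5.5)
Your necessity half is correct and follows the same deviation-by-deviation argument the paper gives. One small fix for (v): additivity over bins is not quite the reason, since removing $x_i$ from $B_i$ would change the decision of $B_i\setminus\{x_i\}$. What makes the restriction argument work is that $x_i=\min(\{x_i\}\cup B_j)$, so $B_i\setminus\{x_i\}$ can be merged into whichever new bin contains $x_i$ without changing any action. The real problem is sufficiency, which you leave as a plan. That plan cannot be completed, because the ``if'' direction is false as stated.

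Write $w(x,y)=\pi_x u^s(x,y)$. Conditions (i)--(v) and the deterministic payoff $\sum_B\sum_{x\in B}w(x,\min B)$ depend only on $w$, which is arbitrary since $u^s$ is. Take $M=5$, any prior with $\pi_0>\pi_1>\dots>\pi_4$, and bins $B_0=\{0,1,2,3\}$, $B_1=\{4\}$. Set $w(x,0)=0$ for $x\le 3$, $w(1,1)=-1$, $w(2,1)=w(3,1)=0.9$, $w(2,2)=w(3,2)=w(3,3)=-10$, $w(4,4)=0$, $w(4,0)=-100$, with all other entries arbitrary. Then:
(i) holds because $w(x,x)<0=w(x,0)$ for $x=1,2,3$.
(ii) holds because $w(x,x)+w(y,x)$ equals $-0.1,-0.1,-20$ for $(x,y)=(1,2),(1,3),(2,3)$, against $w(x,0)+w(y,0)=0$.
(iii) and (iv) are vacuous because $B_1\setminus\{x_1\}=\emptyset$ and $x_1=4$ exceeds every element of $B_0$.
(v) reduces to $w(4,4)\ge w(4,0)$, which holds.
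The candidate earns $0$, yet the partition $\{\{0\},\{1,2,3\},\{4\}\}$ earns $w(1,1)+w(2,1)+w(3,1)=0.8$. Deleting source $4$ gives the case $n=1$ and shows the same failure in Lemma~\ref{lem:noninf_lemma}.

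This is precisely the multi-member bookkeeping you flagged. Consider a new bin whose minimum $x$ is a non-decision source and whose other members $y_1,\dots,y_r$ are non-decision sources. Let $b(\cdot)$ denote the home bin, and set $D=w(x,x)-w(x,x_{b(x)})\le 0$ and $E_k=w(y_k,x)-w(y_k,x_{b(y_k)})$. Conditions (ii) and (iv) give $D+E_k\le 0$ for each $k$. The deviation is unprofitable exactly when $D+\sum_k E_k\le 0$, but summing yields only $\sum_k E_k\le r|D|$, and (i)--(iv) contain no other relation between $u^s(y,x)$ and $u^s(y,x_{b(y)})$. Condition (v) cannot absorb this either. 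It only concerns $\{x_i\}\cup B_j$ with $x_i<x_j$, so it never constrains a split of the bin whose decision is $0$ (in the example it only checks $\{0,4\}$), nor a new bin mixing non-decision sources from several home bins. The paper's own sufficiency step is an informal assertion that overlooks the same issue. At a minimum, (ii) and (iv) must be required for every set $S$ of non-decision sources: forming the bin $S$, with decision $\min S$, must not beat their current actions. These are necessary conditions that the pairwise versions do not imply.
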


\begin{proof}
See Appendix~\ref{proof:quantized_lemma}.
\end{proof}

\begin{remark}
While the conditions in (i) and (ii) are due to Lemma~\ref{lem:noninf_lemma} applied to each quantization bin, the remaining conditions can be interpreted as follows. The conditions in (iii) ensure that the sender does not wish to change the bin of a source whose prior probability is not the largest in the bin. Under the conditions in (iv), the sender does not wish to construct a separate bin by using sources from multiple bins whose prior probabilities are not the largest in the corresponding bins. Finally, the conditions in (v) ensure that the sender does not wish to change the action corresponding to the source with the largest probability in a bin. In this case, it is required to take the other sources into account since the corresponding actions are affected by the bin change of the source with the largest prior probability.  
\end{remark}

\begin{theorem}\label{thm:quantized}
Suppose that the optimal encoding policy among deterministic policies is a quantization policy with bins $B_0,B_1,\dots,B_{n-1}$ and respective decisions $x_0,x_1,\dots,x_{n-1}$, where $n\geq 2$. Suppose that $\pi_0>\pi_1>\dots>\pi_{M-1}$ holds. Then, the equilibrium solution involves a randomized encoder if and only if at least one of the following conditions holds:
\begin{enumerate}
\item [(i)] There exists a critical posterior $Q$ that satisfies 
\begin{align}
\sum_{x\in\mathcal{X}}u^s(x,x_i)Q(x) < u^{s,*}(Q), 
\label{eq:quantized_thm_eq1}
\end{align}
where the receiver is indifferent only between receiver actions in $\mathcal{A}$ under $Q$ for some $\mathcal{A}\subseteq B_i$ with $i\in\{0,1,\dots,n-1\}$.

\item [(ii)] There exists a critical posterior $Q$ that induces equal receiver preferences for actions from multiple bins with 
\begin{align}
\sum_{i=0}^{M-1} \alpha_i u^{s,*}(Q_i)
<u^{s,*}(Q),
\label{eq:quantized_thm_eq2}
\end{align}
where $Q_i(x_i)=1$ for $i=0,1,\dots,n-1$, the set $\{Q_{n},Q_{n+1},\dots,Q_{M-1}\}$ consists of critical posteriors inducing equal preferences only for two distinct actions including $x_i$ inside the bin $B_i$ for $i\in\{0,1,\dots,n-1\}$, and $\alpha_i$ is the $i$th entry of $\boldsymbol{\alpha}$ with
\begin{align}
\boldsymbol{\alpha} = 
\begin{bmatrix}
Q_0 & Q_1 & \cdots &Q_{M-1}
\end{bmatrix}^{-1} Q.
\label{eq:quantized_thm_eq3}
\end{align}
\end{enumerate}
\end{theorem}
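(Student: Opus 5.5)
The plan is to reduce everything to Theorem~\ref{thm:algorithm}. By that result, the equilibrium is attained by a Bayes-plausible combination of $M$ linearly independent posteriors drawn from the critical posterior set $\mathcal{Q}$, weighted by $\boldsymbol{\alpha}=[Q_0\,\cdots\,Q_{M-1}]^{-1}\pi$. A policy is randomized (Definition~\ref{def:randomized}) exactly when some used posterior is a non-degenerate uniform critical posterior $Q$ whose support overlaps that of another used posterior. Such a $Q$ places mass $1/K$ on a set $\mathcal{A}$ with $|\mathcal{A}|\geq 2$, and the receiver is indifferent among the actions in $\mathcal{A}$.

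The quantization policy itself corresponds to the posteriors $\pi|_{B_i}$. These are not critical posteriors unless $\pi$ is uniform on the bin. For this reason I will first re-express the deterministic optimum in algorithm form. Write each bin posterior as a convex combination of $e_{x_i}$ and of two-point critical posteriors $\tfrac12(e_{x_i}+e_x)$, $x\in B_i\setminus\{x_i\}$. Because $\pi_{x_i}>\pi_x$ inside the bin, the weight on $e_{x_i}$ is positive. Under $y=x_i$ on each $\tfrac12(e_{x_i}+e_x)$, which the sender-preferred tie-breaking permits by Lemma~\ref{lem:quantized}(i), the deterministic payoff equals $\sum_i\alpha_i u^{s,*}(Q_i)$. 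Here $Q_i=e_{x_i}$ for $i<n$, and $Q_n,\dots,Q_{M-1}$ are exactly the two-point posteriors described in condition (ii). This gives a full-rank basis $\{Q_i\}$ that represents the deterministic optimum.

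\textbf{Sufficiency.} If (i) holds, take $Q$ supported on some $\mathcal{A}\subseteq B_i$. Replace a small $\epsilon$-portion of the bin posterior $\pi|_{B_i}$ by $Q$, and let the remainder stay a valid posterior on $B_i$ with $x_i$ still strictly preferred; this uses the strict ordering $\pi_0>\dots>\pi_{M-1}$. The remainder still yields action $x_i$, so its contribution is linear in the posterior. The gain is then $\epsilon\bigl(u^{s,*}(Q)-\sum_x u^s(x,x_i)Q(x)\bigr)>0$, and the resulting policy is randomized.

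If (ii) holds, perform a pivot: move $\epsilon$ weight onto $Q$ and compensate by subtracting $\epsilon\boldsymbol{\alpha}'$ from the basis weights, where $\boldsymbol{\alpha}'=[Q_0\cdots Q_{M-1}]^{-1}Q$. For small $\epsilon$ this stays feasible because all base weights are positive. The payoff changes by $\epsilon\bigl(u^{s,*}(Q)-\sum\alpha'_iu^{s,*}(Q_i)\bigr)>0$. Since $Q$ mixes sources from distinct bins, the improved policy cannot be deterministic.

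\textbf{Necessity.} This is the main obstacle. Suppose neither (i) nor (ii) holds. I will show that no critical posterior improves on the deterministic basis, using a local-optimality-implies-global argument in the spirit of linear programming. The sender's problem over weights on the finite set $\mathcal{Q}$ is a linear program: maximize $\sum_{Q\in\mathcal{Q}}w_Q u^{s,*}(Q)$ subject to $\sum_Q w_Q Q=\pi$ and $w\geq0$. The deterministic basis is a vertex. By LP duality or simplex optimality, this vertex is globally optimal iff every nonbasic column has nonpositive reduced cost, i.e. $u^{s,*}(Q)\leq \sum\alpha'_i u^{s,*}(Q_i)$ for every $Q\in\mathcal{Q}$. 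Nonbasic columns split into three cases.
\begin{itemize}
\item $e_x$ for nonleading $x\in B_i$: here the reduced cost is nonpositive by Lemma~\ref{lem:quantized}(i) and (iii).
\item Critical posteriors supported within one bin: these are handled by the negation of (i). The basis restricted to $B_i$ reproduces the linear functional $\sum_x u^s(x,x_i)Q(x)$ on that face, and this step is where care is needed.
\item Critical posteriors spanning several bins: the negation of (ii) gives exactly the reduced-cost inequality.
\end{itemize}
The delicate point is verifying, for the within-bin case, that $\sum\alpha'_iu^{s,*}(Q_i)=\sum_x u^s(x,x_i)Q(x)$ when $Q$ is supported in $B_i$. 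It holds because $\alpha'$ then uses only $e_{x_i}$ and the two-point posteriors of that bin, all with action $x_i$; negative coefficients are allowed since the identity is linear.

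If the reduced-cost conditions are nonpositive, the deterministic policy is optimal. Conversely, any strictly improving optimum must use some nonbasic randomized column. Ties should be resolved by choosing the deterministic optimum, which completes the equivalence.
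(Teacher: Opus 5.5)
Your plan follows the paper's proof: rewrite the quantizer in the basis $R=[Q_0\ \cdots\ Q_{M-1}]$, pivot one critical posterior in, and compare $u^{s,*}(Q)$ with $\sum_i\alpha_iu^{s,*}(Q_i)$. Casting the converse as a reduced-cost/duality check is an improvement, because it also rules out using several nonbasic posteriors at once, which the paper's one-posterior-at-a-time argument does not explicitly address.

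\textbf{The gap.} The claim that carries both your pivot and your ``vertex'' assertion is false. Write $P_i=\pi|_{B_i}/\pi(B_i)$ as $w_0e_{x_i}+\sum_{x\in B_i\setminus\{x_i\}}w_x\tfrac12(e_{x_i}+e_x)$. This forces $w_x=2P_i(x)$ and $w_0=2P_i(x_i)-1$. So the weight on $e_{x_i}$ is positive only when $\pi_{x_i}>\sum_{x\in B_i\setminus\{x_i\}}\pi_x$, and the pairwise inequalities $\pi_{x_i}>\pi_x$ do not give this once $|B_i|\geq 3$. For example, take $\pi=(0.3,0.27,0.23,0.2)$ with $B_0=\{0,1,2\}$ and $B_1=\{3\}$. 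This quantizer is the unique deterministic optimum if $u^s(x,0)=1$ for $x\le 2$, $u^s(3,3)=1$, and all other payoffs are $0$. Yet the weight on $e_0$ is $-0.2$. Consequences:
\begin{itemize}
\item $R^{-1}\pi\not\geq 0$, so $R$ is not a feasible basis and the deterministic optimum is not that vertex.
\item Your step ``subtract $\epsilon\boldsymbol{\alpha}'$; feasible since all base weights are positive'' produces no valid policy. The paper's ``increase (decrease) the probability of $Q_i$'' step is loose at exactly the same spot.
\end{itemize}
Separately, the tie at $\tfrac12(e_{x_i}+e_x)$ compares $u^s(x_i,x_i)+u^s(x,x_i)$ with $u^s(x_i,x)+u^s(x,x)$. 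Lemma~\ref{lem:quantized}(i) only controls the second summands. What actually makes the sender pick $x_i$ there is the negation of (i), which you may assume in both remaining directions.

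\textbf{The repair.} Both directions go through without any primal basis. Assume (i) fails and define the dual vector $p:=R^{-T}(u^{s,*}(Q_0),\dots,u^{s,*}(Q_{M-1}))^T$. It satisfies $p(x)=u^s(x,x_{b(x)})$, where $b(x)$ is the index of the bin containing $x$. Hence $p^T\pi$ is exactly the deterministic payoff, and $p^TQ=\sum_i\alpha_iu^{s,*}(Q_i)$.
\begin{itemize}
\item Sufficiency of (ii): reuse your construction for (i). Give $Q$ probability $\epsilon$, and give each bin $B_j$ one posterior proportional to $\pi|_{B_j}-\epsilon Q|_{B_j}$. For small $\epsilon$ the strict ordering keeps $x_j$ the unique best response on each remainder. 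The payoff is then $p^T\pi+\epsilon(u^{s,*}(Q)-p^TQ)>p^T\pi$.
\item Converse: your case analysis shows $u^{s,*}(Q)\le p^TQ$ for every $Q\in\mathcal{Q}$. Weak duality then gives $\sum_Q w_Qu^{s,*}(Q)\le p^T\pi$ for every $w\geq 0$ with $\sum_Q w_QQ=\pi$. Combined with Theorem~\ref{thm:algorithm}, the deterministic policy is optimal.
\end{itemize}
This is all you need, so drop the ``iff'' for reduced costs, which fails at degenerate vertices anyway.
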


\begin{proof}
See Appendix~\ref{proof:quantized_thm}.
\end{proof}

\begin{remark}
In Lemma~\ref{lem:quantized}, the condition in (i) is due to Theorem~\ref{thm:noninf} applied to each bin. In other words, if at least one of the conditions in (i) is satisfied, then the sender's payoff improves by applying randomization for sources within a single bin. On the other hand, if the conditions in (i) are not satisfied, one needs to consider also randomization of sources from multiple bins to see whether randomization is beneficial for the sender. The conditions in (ii) addresses such randomizations at the sender. In particular, such randomized policies that mix sources from multiple bins can be obtained from the optimal quantization policy by adjusting probabilities based on $\boldsymbol{\alpha}$ specified in \eqref{eq:quantized_thm_eq3}. Then, the inequality in \eqref{eq:quantized_thm_eq2} ensures that the sender's payoff improves by mixing the corresponding sources. If the inequalities in \eqref{eq:quantized_thm_eq2} are not satisfied for each source mixture from multiple bins, then the equilibrium solution is deterministic with the corresponding quantization policy. 
\end{remark}

\section{Explicit Solutions and Analysis for Binary and Ternary Settings}

\subsection{Binary Setting}\label{sec:binary}

In this section, we consider a binary source and a binary receiver action by taking $\mathcal{X}=\mathcal{Y}=\{0,1\}$. We let $\pi_0\geq \pi_1$. Without any loss, we can take $\mathcal{M}=\{0,1\}$. In the following, we investigate the equilibrium behavior by applying the results in Lemma~\ref{lem:fully_revealing_lemma}, Lemma~\ref{lem:noninf_lemma}, Theorem~\ref{thm:fully_revealing} and Theorem~\ref{thm:noninf}. Since the only deterministic policies in the binary case are fully revealing and noninformative policies, the results of Lemma~\ref{lem:quantized} and Theorem~\ref{thm:quantized} are not applicable in this case.

\begin{corollary}\label{cor:binary}
If $\pi_0>\pi_1$ holds, the equilibrium solution can be obtained by the following:  
\begin{enumerate}
\item[(i)] If $\min\{u^s(1,1)-u^s(1,0),u^s(0,0)-u^s(0,1)\}\geq 0 $ holds, the equilibrium solution involves a fully revealing sender.
\item[(ii)] If $u^s(1,0)-u^s(1,1)>0$ and $u^s(1,0)-u^s(1,1)\geq u^s(0,1)-u^s(0,0)$ are satisfied, the equilibrium solution involves a noninformative sender.
\item[(iii)] If $\min\{u^s(1,1)-u^s(1,0),u^s(0,1)-u^s(0,0)\}> 0$ or $u^s(0,1)-u^s(0,0)>u^s(1,0)-u^s(1,1)>0$ holds, the equilibrium solution is attained by a randomized sender.
\end{enumerate}
For $\pi_0=\pi_1$, if $\min\{u^s(1,1)-u^s(1,0),u^s(0,0)-u^s(0,1)\}>0$ holds, the equilibrium solution leads to a fully revealing sender, whereas in the converse case, the equilibrium solution involves a noninformative sender.
\end{corollary}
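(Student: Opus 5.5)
Since the source and action sets are binary, I would solve the problem directly through the concavification picture rather than only quoting the general theorems. A posterior is described by the scalar $q=Q(1)\in[0,1]$, and the receiver plays $y=1$ exactly when $q>1/2$. At $q=1/2$ it breaks the tie toward the sender under Assumption~\ref{assumption:tiebreaking}. Hence
\begin{align*}
u^{s,*}(q)=\begin{cases}(1-q)u^s(0,0)+q\,u^s(1,0), & q<1/2,\\ (1-q)u^s(0,1)+q\,u^s(1,1), & q>1/2,\end{cases}
\end{align*}
and at $q=1/2$ it takes the larger of the two affine values. The prior is $p=\pi_1\le 1/2$. By Theorem~\ref{thm:algorithm}, the critical posterior set is $\{0,1/2,1\}$. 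With two messages, the only Bayes plausible pairs of critical posteriors bracketing $p$ are therefore $\{0,1\}$ (fully revealing), $\{0,1/2\}$ (randomized), and the trivial posterior $p$ itself (noninformative). For $\pi_0>\pi_1$ I would write the three payoffs explicitly:
\begin{itemize}
\item fully revealing: $U_F=\pi_0u^s(0,0)+\pi_1u^s(1,1)$;
\item noninformative: $U_N=\pi_0u^s(0,0)+\pi_1u^s(1,0)$;
\item randomized, with weights $\alpha=1-2\pi_1$ on $q=0$ and $2\pi_1$ on $q=1/2$: $U_R=(1-2\pi_1)u^s(0,0)+\pi_1\max\{u^s(0,0)+u^s(1,0),\,u^s(0,1)+u^s(1,1)\}$.
\end{itemize}

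Introduce $a=u^s(1,1)-u^s(1,0)$ and $b=u^s(0,1)-u^s(0,0)$. Then $U_F-U_N=\pi_1a$. Moreover, $U_R-U_N=\pi_1\max\{0,a+b\}$, using that $U_R$ with the first branch of the max equals $U_N$. So the comparison reduces to comparing $\pi_1a$, $0$, and $\pi_1\max\{0,a+b\}$, and each case of the corollary is sign bookkeeping on $a$ and $b$.
\begin{itemize}
\item Case (i): $a\ge0$ and $b\le0$. Then $a\ge a+b$, so fully revealing weakly dominates. This matches Lemma~\ref{lem:fully_revealing_lemma} and Theorem~\ref{thm:fully_revealing}, since pairwise randomization does not help.
\item Case (ii): $a<0$ and $a+b\le0$. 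Then $0$ is the maximum, so the noninformative policy is optimal. This matches Lemma~\ref{lem:noninf_lemma} and Theorem~\ref{thm:noninf} with condition \eqref{eq:noninf_thm_cond} failing.
\item Case (iii): $a>0$ and $b>0$, or $b>-a>0$. In both subcases $a+b>\max\{a,0\}$ strictly, so the posterior $1/2$ is strictly needed and the solution is randomized. In policy form, $x=0$ is sent as message $1$ with probability $\pi_1/\pi_0$.
\end{itemize}
I would double check that cases (i) through (iii) exhaust all sign patterns, up to ties that the statement assigns.

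For $\pi_0=\pi_1$ we have $p=1/2$. The only remaining candidates are the full revelation pair $\{0,1\}$ and the noninformative posterior $1/2$; the randomized option collapses to the noninformative one. This gives $U_F-U_N=\tfrac12(u^s(0,0)+u^s(1,1))-\tfrac12\max\{u^s(0,0)+u^s(1,0),\,u^s(0,1)+u^s(1,1)\}=\tfrac12\min\{a,-b\}$. Full revelation is therefore strictly better exactly when $\min\{a,-b\}>0$, and otherwise the noninformative policy is optimal, with ties resolved as the statement chooses.

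The main obstacle is justifying that no other posterior combinations matter, namely interior posteriors other than $1/2$ or three-message policies. I would rely on Theorem~\ref{thm:algorithm}. Alternatively, I would note directly that $u^{s,*}$ is piecewise affine with its only kink or jump at $1/2$, where the upper semicontinuous tie-breaking holds. Its concave envelope is then determined by its values at $0$, $1/2$, and $1$, so the concavification of \cite{KamGen2011} evaluated at $p$ is attained on these points.
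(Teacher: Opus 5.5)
Your proof is correct, and your computations check out. With $a=u^s(1,1)-u^s(1,0)$ and $b=u^s(0,1)-u^s(0,0)$, you get $U_F-U_N=\pi_1 a$ and $U_R-U_N=\pi_1\max\{0,a+b\}$. So for $\pi_0>\pi_1$ the optimal value is $U_N+\pi_1\max\{a,0,a+b\}$, and for $\pi_0=\pi_1$ you get $U_F-U_N=\tfrac12\min\{a,-b\}$.

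Your route differs from the paper's. The paper does not compute the value directly; it derives the corollary by specializing its general structural results in two stages:
\begin{enumerate}
\item Lemma~\ref{lem:fully_revealing_lemma} and Lemma~\ref{lem:noninf_lemma} identify the best deterministic policy: fully revealing when $a\ge 0$, noninformative when $a\le 0$.
\item Theorem~\ref{thm:fully_revealing} and Theorem~\ref{thm:noninf} then decide whether randomization helps. Both reduce to a test at the single critical posterior $[1/2,1/2]^T$. In the fully revealing branch the pairwise test reduces to $b>0$. In the noninformative branch condition \eqref{eq:noninf_thm_cond} reduces to $a+b>0$.
\end{enumerate}

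You skip this split. You compare all candidates at once using only the critical posterior set of Theorem~\ref{thm:algorithm}, or, as you note, the concave envelope of a function that is affine on each side of $q=1/2$ and upper semicontinuous there. Your argument is therefore more self-contained: it does not lean on the general lemmas, whose proofs in the paper are fairly sketchy. It also gives the optimal value in closed form. The paper's route, in return, shows the binary result as an instance of its general ``best deterministic policy, then test randomization'' characterization.

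One correction to your final check: cases (i)--(iii) do \emph{not} exhaust all sign patterns. Take $a=0<b$:
\begin{itemize}
\item case (i) needs $b\le 0$;
\item case (ii) needs $a<0$;
\item both alternatives in case (iii) need $a\neq 0$.
\end{itemize}
So none applies, yet your formula gives $\max\{a,0,a+b\}=b>0$, so randomization is strictly optimal there. This does not affect the corollary, which only asserts three implications, but you should not claim exhaustiveness.

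Also, the strict optimality of randomization in (iii) uses $\pi_1>0$. If $\pi_1=0$, then $U_R=U_N=U_F$ and all policies coincide. This full-support assumption is implicit in the paper.
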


\begin{remark}
Corollary~\ref{cor:binary} reveals that for $\pi_0>\pi_1$, the sender fully reveals the source when it does not prefer a mismatched action under both source realizations. A noninformative sender is optimal when the sender prefers a mismatched action at the receiver under a source observation of $x=1$ and the desire to induce a mismatched action for the source $x=1$ outweighs such a desire for the source $x=0$. In the remaining cases, we obtain a randomized sender as the equilibrium solution where the sender gives uncertain information to the receiver by inducing a posterior of $[1/2,1/2]^T$. Moreover, it is noted that a randomized sender is not needed for $\pi_0=\pi_1$ since the receiver is already uncertain regarding the source under the prior distribution. 
\end{remark}

\begin{remark}[Interpretation of the conditions and results.] \label{InterpRemark} Theorem~\ref{thm:algorithm} implies that the equilibrium solution induces either of the following posterior pairs depending on the payoff structure and the prior probability of the source: (i) $Q_0=[1,0]^T$ and $Q_1=[0,1]^T$, (ii) $Q_0=Q_1=[\pi_0,\pi_1]^T$, (iii) $Q_0=[1,0]^T$ and $Q_1=[1/2,1/2]^T$, (iv) $Q_0=[1/2,1/2]^T$ and $Q_1=[0,1]^T$. We discuss these cases in the following.

\begin{enumerate}
\item [(i)] In this case, the sender does not wish to hide information from the receiver. In other words, even though the payoffs are not aligned, there is full transparency at the sender. Since posteriors $Q_0=[0,1]^T$ and $Q_1=[1,0]^T$ are induced, the receiver is certain of the source for each message observation. The equilibrium is attained by such a policy in Fig.~\ref{fig:binary_fig4}.
\item [(ii)] In this case, the equilibrium solution essentially leads to a single posterior distribution that is equal to the prior distribution. This corresponds to a noninformative policy, which is the essentially unique (lossy/irreversible) quantization policy for the binary source setting; which is also equivalent to a noninformative randomized encoding policy (Definition~\ref{def:deterministic}(iii)). In this case, the sender's payoff does not improve by inducing any posterior other than the prior and the sender completely hides information from the receiver. 
\item [(iii)] In this case, the sender applies randomization. When the equilibrium solution is randomized, inducing a posterior other than $[0,1]^T$, $[1,0]^T$ and $[\pi_0,\pi_1]^T$ improves the sender's payoff. Due to Theorem~\ref{thm:algorithm}, a randomized equilibrium solution always leads to a posterior of $[1/2,1/2]^T$ under which the receiver is indifferent between the actions. By anticipating and inducing the posterior under which the receiver is indifferent, the sender manipulates the receiver to take the desired action. In order to induce these posteriors, the sender maps an observation of $X=0$ to messages $M=0$ and $M=1$ with certain nonzero probabilities, and transmits a message of $M=1$ when $X=1$ is observed. When the receiver observes $M=0$, the receiver knows that the source is $X=0$. In that sense, the sender is telling the truth in this case. However, a message observation of $M=1$ means that the source is either $X=0$ or $X=1$ each with a probability of $1/2$. While the receiver is trying to estimate the source, the sender transmits a message that prevents the receiver from extracting the source. In fact, the sender gives corrupted information to the receiver in order to take advantage of such uncertainty at the receiver. In Fig.~\ref{fig:binary_fig1}, the equilibrium solution is attained by such a randomized policy. 
\item [(iv)] This case is similar to the previous case. In particular, the sender maps an observation of $X=1$ to messages $M=0$ and $M=1$ with certain nonzero probabilities, and conveys a message of $M=0$ when $X=0$ is observed. Again, information is corrupted that misleads the receiver under the message observation of $M=0$. 
\end{enumerate}
\end{remark}

\begin{figure}
\centering
\includegraphics[width=2.5in]{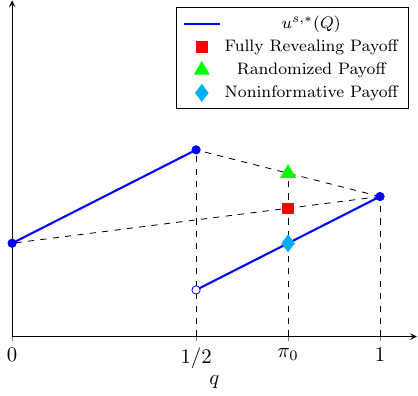}
\caption{Payoffs of fully informative, randomized and noninformative policies for a particular payoff structure where $Q=[q,(1-q)]^T$. A fully informative policy induces posteriors of $[0,1]^T$ and $[1,0]^T$ with probabilities of $\pi_1$ and $\pi_0$, respectively. A noninformative policy is equivalent to inducing a single posterior $[\pi_0,(1-\pi_0)]^T$. The particular randomized policy leads to the equilibrium solution by inducing posteriors of $[0,1]^T$ and $[1/2,1/2]^T$ with certain probabilities. In order to obtain these posteriors, the encoder maps a source observation of $X=1$ to two distinct messages (i.e., $M=0$ and $M=1$) with nonzero probabilities while mapping an observation of $X=0$ to a single message, say $M=0$, with probability 1. A different payoff structure with the values at $q=1/2$ fixed may lead to a fully revealing policy as the equilibrium solution whereas a noninformative policy cannot be the equilibrium solution in this case.}
\label{fig:binary_fig1}
\end{figure}

\begin{figure}
\centering
\includegraphics[width=2.5in]{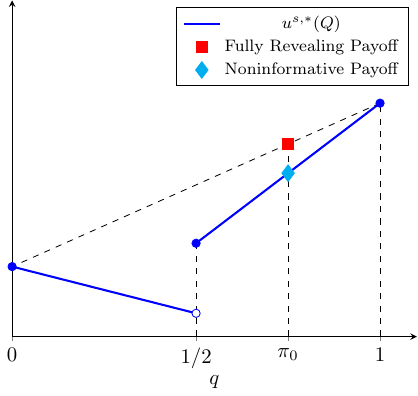}
\caption{Payoffs of fully informative and noninformative policies for a particular payoff structure where $Q=[q,(1-q)]^T$. For this particular payoff structure, the equilibrium solution is attained by fully revealing the source. In this setup, an encoding policy that induces posteriors of $[0,1]^T$ and $[1/2,1/2]^T$ is equivalent to a noninformative policy. A different payoff structure with the values at $q=1/2$ fixed may lead to a noninformative policy as the equilibrium solution whereas there is no need to use a randomized policy in this case, as well.}
\label{fig:binary_fig4}
\end{figure}

\subsection{Ternary Setting}

In this section, we consider a ternary setting and discuss implications of the theoretical results in the previous section. Let $\mathcal{X}=\mathcal{Y}=\{0,1,2\}$. Without any loss, we let $\mathcal{M}=\{0,1,2\}$. In addition, we consider $\pi_0\geq \pi_1\geq \pi_2$. Due to Theorem~\ref{thm:algorithm}, only the posteriors that are illustrated in Fig.\ref{fig:ternary} are necessary for obtaining the equilibrium solution. In other words, the sender does not gain by inducing a posterior other than the ones in Fig.\ref{fig:ternary}.

\begin{figure}
\centering
\includegraphics[width=2.5in]{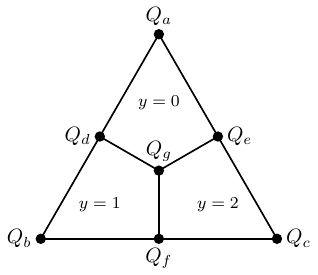}
\caption{Illustration of posteriors that can be induced by an optimal encoding policy for the ternary setting where $Q_a=[1,0,0]^T$, $Q_b=[0,1,0]^T$, $Q_c=[0,0,1]^T$, $Q_d=[1/2,1/2,0]^T$, $Q_e=[1/2,0,1/2]^T$, $Q_f=[0,1/2,1/2]^T$ and $Q_g=[1/3,1/3,1/3]^T$.}
\label{fig:ternary}
\end{figure}

\begin{remark}
Under the conditions of Lemma~\ref{lem:fully_revealing_lemma}, the optimal encoder under a deterministic policy restriction induces posteriors $Q_a$, $Q_b$ and $Q_c$ with probabilities $\pi_0$, $\pi_1$ and $\pi_2$, respectively. Due to Theorem~\ref{thm:fully_revealing}, one needs to compare the payoffs under posterior pairs $(Q_a,Q_b)$, $(Q_a,Q_c)$ and $(Q_b,Q_c)$ with $Q_d$, $Q_e$ and $Q_f$, respectively, to see whether randomization improves the sender's payoff. This comparison effectively checks whether giving misinformation by mixing source pairs is beneficial for the sender. The proof of Theorem~\ref{thm:fully_revealing} reveals that the sender does not wish to induce $Q_g$ by mixing all the sources as long as pairwise randomization is not beneficial for all pairs.
\end{remark}

\begin{remark}
If the conditions in Lemma~\ref{lem:noninf_lemma} are satisfied, the equilibrium solution under a deterministic policy restriction is equivalent to inducing a single posterior of $[\pi_0,\pi_1,\pi_2]^T$. A randomized sender outperforms such a noninformative policy if inducing at least one of the posteriors in $\{Q_d,Q_e,Q_f,Q_g\}$ yields a larger payoff for the sender. This means that it is required to take all source mixtures into account to conclude that the optimal encoder is deterministic or randomized.  
\end{remark}

\begin{remark}
As an example, suppose that the equilibrium solution under a deterministic policy restriction quantizes the source pair of $x=0$ and $x=1$ while revealing the source value of $x=2$ (see Lemma~\ref{lem:quantized} for the corresponding conditions). Such a quantized policy leads to posteriors of $[\pi_0/(\pi_0+\pi_1),\pi_1/(\pi_0+\pi_1),0]^T$ and $[0,0,1]^T$. Theorem~\ref{thm:quantized} specifies explicit conditions ensuring the optimality of a randomized or quantized encoder. In particular, one first needs to compare the actions of $y=0$ and $y=1$ under $Q_d$ (see the condition in (i)). If the corresponding condition in (i) is not satisfied, then the quantization policy is equivalent to inducing posteriors of $Q_a$, $Q_c$ and $Q_d$ with certain probabilities. Then, the conditions in (ii) check whether inducing at least one of the posteriors in $\{Q_e,Q_f,Q_g\}$ improves the sender's payoff, where one changes the probabilities of the posteriors in $\{Q_a,Q_c,Q_d\}$ specified by the coefficients in \eqref{eq:quantized_thm_eq3}. 
\end{remark}

\section{Conclusion}

We have investigated the structure of the equilibrium solution for the Bayesian persuasion problem. We have completely characterized conditions under which the equilibrium solution leads to randomized, fully revealing, quantized, or noninformative signaling at the sender, considering a setup with a receiver that wishes to extract the source. We have provided an algorithm to obtain the equilibrium solution where one restricts attention to a finite set of posteriors at the receiver without any loss of optimality. We have also proven that for general payoff structures, the equilibrium solution leads to noninformative signaling when the payoffs are completely misaligned. In addition, we have unified the completely aligned (team) and misaligned (zero-sum) setups for the specific receiver payoff structure by showing that the continuity of the sender's value function leads to either a fully revealing or a noninformative sender. Future work includes considering general receiver payoff structures.

\section*{Acknowledgments}
The authors are grateful to Dr. Serkan Sar{\i}ta\c{s} for his constructive comments. 

\appendix

\subsection{Proof of Lemma~\ref{lem:nonconvex_posteriors}}\label{proof:nonconvex_posteriors}

We provide an example to conclude the result. Consider a ternary source setting. Let $\pi=[0.5,0.4,0.1]^T$. A set of posteriors $Q_0=[1,0,0]^T$, $Q_1=[0,1,0]^T$ and $Q_2=[0,0,1]^T$ is feasible with $P(Q_0)=0.5$, $P(Q_1)=0.4$ and $P(Q_2)=0.1$. Let us take another set of posteriors $Q_0'=[0.5,0.45,0.05]^T$, $Q_1'=[0.4,0.1,0.5]^T$ and $Q_2'=[0.55,0.05,0.4]^T$ that also satisfy the Bayesian plausibility constraint in \eqref{eq:optim_problem} with certain probabilities. Then, it is a linear algebra exercise to show that $\{Q_0'',Q_1'',Q_2''\}$ is not a feasible posterior distribution combination where $Q_0''=\lambda Q_0+ (1-\lambda)Q_i'$, $Q_1''=\lambda Q_1+ (1-\lambda)Q_j'$, $Q_2''=\lambda Q_2+ (1-\lambda)Q_k'$, $\lambda = 1/2$, and $(i,j,k)$ is any permutation of $(0,1,2)$. 

\subsection{Proof of Theorem~\ref{thm:blackwell}}\label{proof:blackwell}

Let $u^s(x,y)=u^r(x,y)\triangleq u(x,y)$. For a message observation $m\in\mathcal{M}$, the receiver computes the posterior distribution of the source to determine its action. Let $Q(x)$ denote a posterior distribution of the source where $x\in\mathcal{X}$. For a given posterior distribution $Q(\cdot)$, we define the optimal payoff value as
\begin{align}
V(Q) = 
\max_{y\in\mathcal{Y}} 
\sum_{x\in\mathcal{X}} 
u(x,y) Q(x).\label{eq:VQ}
\end{align}
We first show that $V(Q)$ is convex. Towards that goal, let us take two probability distributions $Q'$ and $Q''$. Let $Q=\alpha Q' + (1-\alpha) Q''$ for $\alpha\in[0,1]$. Then, we write
\begin{align}
V(Q) 
&= \max_{y\in\mathcal{Y}} 
\Big(
\alpha \sum_{x\in\mathcal{X}} 
u(x,y) Q'(x)\nonumber \\
&\hphantom{= \max_{y\in\mathcal{Y}} \Big(}
+(1-\alpha) \sum_{x\in\mathcal{X}} 
u(x,y) Q''(x)
\Big)\nonumber \\
&\leq  \max_{y\in\mathcal{Y}}
\alpha \sum_{x\in\mathcal{X}} 
u(x,y) Q'(x)\nonumber \\
&\hphantom{=}
+ \max_{y\in\mathcal{Y}}
(1-\alpha) \sum_{x\in\mathcal{X}} 
u(x,y) Q''(x)\nonumber \\
&= \alpha V(Q') + (1-\alpha) V(Q''),
\end{align}
which proves that $V(\cdot)$ is convex. 

Next, we compare two encoding policies as follows. Let us consider an encoding policy $\gamma^s_1(\cdot)$ defined by conditional distributions $P(m^1 | x)$ where $m^1\in\mathcal{M}$ and $x\in\mathcal{X}$. These conditional distributions form an information structure from the sender to the receiver. Let us consider another information structure induced by a certain encoding policy $\gamma^s_2(\cdot)$ with the following property: 
\begin{align}
P(m^2\vert x) = \sum_{m^1\in\mathcal{M}} P(m^2\vert m^1) P(m^1\vert x)
\label{eq:garbling}
\end{align}
for some conditional distributions $P(m^2|m^1)$ where $m^1\in\mathcal{M}$, $m^2\in\mathcal{M}$ and $x\in\mathcal{X}$. Under \eqref{eq:garbling}, we say that the information structure leading to $P(m^2|x)$ is garbled (or stochastically degraded) with respect to the information structure leading to $P(m^1|x)$. Next, we compare the payoffs under these information structures in the following:
\begin{align}
U(\gamma^s_2) 
&= \sum_{m^2\in\mathcal{M}} 
V\Big(P(\cdot\,\vert m^2)\Big) 
P(m^2) \nonumber \\
&= \sum_{m^2\in\mathcal{M}} 
V\bigg(\sum_{m^1\in\mathcal{M}}P(\cdot\,\vert m^1) P(m^1\vert m^2)\bigg) 
P(m^2) \nonumber \\
&\leq  \sum_{m^2\in\mathcal{M}} 
\sum_{m^1\in\mathcal{M}} P(m^1\vert m^2)
V\Big(P(\cdot\,\vert m^1)\Big) 
P(m^2) \nonumber \\
&=  \sum_{m^1\in\mathcal{M}} 
V\Big(P(\cdot\,\vert m^1)\Big) 
P(m^1) =U(\gamma^s_1),
\end{align}
where the second line follows from the fact that $X$, $M^1$ and $M^2$ form a Markov chain in that order, and the inequality is due to the convexity of $V(\cdot)$ and Jensen's inequality. This proves that if one applies stochastic degradation to an information structure, then the payoff cannot improve. Therefore, it is optimal to completely reveal the source since any other encoding policy induces an information structure that is a garbling of a fully revealing information structure. 

\subsection{Proof of Theorem~\ref{thm:zerosum}}\label{proof:zerosum}

We show that $u^{s,*}(\cdot)$ is concave to prove the result. To achieve this goal, we make use of the following observation. Suppose that $y\in\mathcal{Y}$ is a best response of the receiver under a posterior distribution $Q$. This means that $\sum_{x\in\mathcal{X}}u^r(x,y)Q(x)\geq \sum_{x\in\mathcal{X}}u^r(x,y')Q(x)$ holds for all $y'\in\mathcal{Y}$. Due to the zero-sum property, we obtain $\sum_{x\in\mathcal{X}}u^s(x,y)Q(x)\leq \sum_{x\in\mathcal{X}}u^s(x,y')Q(x)$ for all $y'\in\mathcal{Y}$. This implies that $y$ is the worst receiver action from the perspective of the sender under the posterior distribution $Q$. Now, let us take $Q'$ and $Q''$ that respectively lead to $y=y'$ and $y=y''$ as the best response of the receiver, where $y',y''\in\mathcal{Y}$. Let $Q(x) = \lambda Q'(x) + (1-\lambda)Q''(x)$ for $\lambda \in [0,1]$. If the best response of the receiver under $Q$ is $y\in\mathcal{Y}$, we can write 
\begin{align}
&\lambda \sum_{x\in\mathcal{X}} u^s(x,y') Q'(x) 
+(1-\lambda) \sum_{x\in\mathcal{X}} u^s(x,y'') Q''(x) \nonumber\\
&\leq 
\sum_{x\in\mathcal{X}} u^s(x,y) (\lambda Q'(x) +(1-\lambda)Q''(x)), 
\label{eq:zerosum}
\end{align}
where we use the fact that $y'$ and $y''$ are the least preferred actions for the sender under the posterior distributions $Q'$ and $Q''$, respectively. Hence, the inequality in \eqref{eq:zerosum} reveals that $u^{s,*}(\cdot)$ is concave. Then, by Theorem~\ref{thm:KamGen}, it follows that the equilibrium solution involves a noninformative sender. 

\subsection{Proof of Lemma~\ref{lem:continuity}}\label{proof:lem_continuity}

Let us take a set $A\subseteq \mathcal{Y}$ and a posterior $Q$ such that \eqref{eq:lem_continuity_eq1} is satisfied. Under this posterior, the receiver chooses the action that gives the highest payoff for the sender among the actions in $A$. Let $y'$ denote this action. If the condition in \eqref{eq:lem_continuity_eq2} is not satisfied, then there exists an action $y''\in A$ with 
\begin{align}
\sum_{x\in\mathcal{X}}u^s(x,y')Q(x)>\sum_{x\in\mathcal{X}}u^s(x,y'')Q(x).
\label{eq:lem_continuity_eq3}
\end{align}
Now, consider $Q_\epsilon$ with $Q_\epsilon(x)=Q(x)-\frac{\epsilon}{|A|-1}$ for all $x\neq y''$ and $x\in A$, $Q_\epsilon(y'')=Q(y'')+\epsilon$, and $Q_\epsilon(x)=0$ for all $x\notin A$, where $\epsilon>0$ is small. Since $Q(x)=1/|A|$ is satisfied for all $x\in A$, $Q_\epsilon$ is a valid distribution for small $\epsilon>0$. Under $Q_\epsilon$, the receiver selects $y''$ as its best response. This implies that $u^{s,*}(Q)>\lim_{\epsilon\to 0^+} u^{s,*}(Q_\epsilon)$ due to \eqref{eq:lem_continuity_eq3}. This reveals that if \eqref{eq:lem_continuity_eq2} is not satisfied, the continuity condition is violated. The proof for the converse is similar.

\subsection{Proof of Theorem~\ref{thm:continuous}}\label{proof:continuous}

For the given payoff structure of the receiver (i.e., $u^r(x,y)=1$ for all $x=y$ and $u^r(x,y)=0$ for all $x\neq y$), we obtain a set of posterior probability distributions under which the receiver is indifferent between certain actions. Under these posteriors, the sender must also be indifferent between the corresponding receiver actions since otherwise the continuity condition is violated (see Lemma~\ref{lem:continuity}). Although this set of posterior distributions is large, we show that one can restrict attention to a subset of them. More specifically, it is sufficient to consider $Q_a\triangleq [1/M,\dots,1/M]^T$ and $\{Q_i\}_{i=0}^{M-1}$ where $Q_i(x)=1/(M-1)$ for $x\neq i$ and $Q_i(i)=0$. Here, $Q_a$ is a posterior under which the receiver is indifferent between all the actions, and $Q_i$ is a posterior under which the receiver is indifferent between all the actions except for the $i$th action. The sender must also be indifferent between all the receiver actions under $Q_a$. More specifically, this requires that
\begin{align}
\sum_{x\in\mathcal{X}} Q_a(x)u^s(x,i)=\sum_{x\in\mathcal{X}} Q_a(x)u^s(x,j)
\label{eq:continuity_1}
\end{align} 
holds for all $i,j\in\mathcal{Y}$. This gives $M-1$ linearly independent equations that the sender's payoff structure must satisfy. In addition, the set $\{Q_i\}_{i=0}^{M-1}$ contains posteriors under which the receiver prefers all the actions equally except for the $i$th action. Since the sender must also be indifferent between the corresponding actions for continuity of $u^{s,*}(\cdot)$, we obtain
\begin{align}
\sum_{x\in\mathcal{X}} Q_i(x)u^s(x,j) = \sum_{x\in\mathcal{X}} Q_i(x)u^s(x,k) 
\label{eq:continuity_2}
\end{align}
for all $j,k\in\mathcal{Y}$ satisfying $j\neq i$ and $k\neq i$. This gives a total of $M(M-2)$ linearly independent equations that the sender's payoff structure must satisfy. 

Next, we show that the conditions in \eqref{eq:continuity_1} and \eqref{eq:continuity_2} are sufficient for ensuring the continuity of $u^{s,*}(\cdot)$. First of all, we have a family of posteriors $\lambda Q_i + (1-\lambda)Q_a$ parametrized by $\lambda\in(0,1)$ under which the receiver is indifferent between all the receiver actions except for the $i$th action. By combining \eqref{eq:continuity_1} and \eqref{eq:continuity_2}, it is seen that the sender is also indifferent between the corresponding actions under $\lambda Q_i + (1-\lambda)Q_a$. Hence, as long as \eqref{eq:continuity_1} and \eqref{eq:continuity_2} are satisfied, $u^{s,*}(\cdot)$ is continuous at posteriors with all nonzero entries. Moreover, one can iteratively obtain the continuity conditions for posteriors with at least one nonzero entry by using \eqref{eq:continuity_1} and \eqref{eq:continuity_2}. For brevity, we give an example that illustrates the approach for obtaining the rest of the continuity conditions. More specifically, the receiver is indifferent between all the receiver actions except for $y=M-2$ and $y=M-1$ under $Q=[1/(M-2),\dots,1/(M-2),0,0]^T$. The sender must also be indifferent between the corresponding actions under $Q$. These indifference conditions involving the sender's payoff under $Q$ can be obtained by linearly combining the conditions under $Q_a$, $Q_{M-2}$ and $Q_{M-1}$ in \eqref{eq:continuity_1} and \eqref{eq:continuity_2}. In particular, we know that the conditions in \eqref{eq:continuity_1} for $Q_a$ and the conditions in \eqref{eq:continuity_2} for $Q_{M-2}$ and $Q_{M-1}$ are satisfied for all actions in $\{0,1,\dots,M-2\}$. In addition, we can write 
\begin{align}
Q = \big((M-1)(Q_{M-1}+Q_{M-2}) - M Q_a\big)/(M-2).
\label{eq:continuity_3}
\end{align}
Then, we can use the corresponding coefficients in \eqref{eq:continuity_3} to linearly combine the equations involving $Q_a$, $Q_{M-2}$ and $Q_{M-1}$ in \eqref{eq:continuity_1} and \eqref{eq:continuity_2} so that we obtain the continuity conditions involving $Q$ for all actions in $\{0,1,\dots,M-2\}$. In a similar manner, one can linearly combine the equations in \eqref{eq:continuity_1} and \eqref{eq:continuity_2} to express the continuity conditions for all posteriors inducing equal receiver preferences for certain actions. This reveals that it is sufficient to impose the constraints in \eqref{eq:continuity_1} and \eqref{eq:continuity_2} in order to make $u^{s,*}(\cdot)$ continuous.

We have shown that the equalities in \eqref{eq:continuity_1} and \eqref{eq:continuity_2} are necessary and sufficient conditions for the continuity of $u^{s,*}(\cdot)$. Thus, the sender's payoff structure must satisfy $M^2-M-1$ linear equations that are linearly independent. These equations have at least a solution since the receiver's payoff structure already satisfies them. With $M^2$ unknowns for the sender's payoff structure and $M^2-M-1$ linearly independent equations, we obtain a family of solutions with dimension $M+1$. In addition, we know that for each $x\in\mathcal{X}$, a payoff structure satisfying $u^s(x,0)=u^s(x,1)=\dots=u^s(x,M-1)\neq 0$ and $u^s(x',y)=0$ for all $x'\neq x$ and $y\in\mathcal{Y}$ is a solution. These payoff structures give $M$ distinct and linearly independent elements to construct a basis for the space of solutions. Since the space of solutions has dimension $M+1$, we have an additional element that makes up a complete basis for the family of solutions, ensuring the continuity of $u^{s,*}(\cdot)$. These observations reveal that the sender's payoff structure must be of the following form:  
\begin{align*}
\begin{bmatrix}
u^s(0,0) & u^s(1,0) & \dots \\
u^s(0,1) & u^s(1,1) & \dots \\
\vdots & \vdots &  \\
\end{bmatrix}
=
\begin{bmatrix}
a_0 & a_1 & \dots \\
a_0 & a_1 & \dots \\
\vdots & \vdots & 
\end{bmatrix}
+
\alpha
W
\end{align*}
for a certain specific matrix $W$ and arbitrary $a_0,a_1,\dots\in\mathbb{R}$ and $\alpha\in\mathbb{R}$, where the matrix comes from the additional element contributing to the basis. Similarly, 
the receiver's payoff structure must satisfy 
\begin{align*}
\begin{bmatrix}
u^r(0,0) & u^r(1,0) & \dots \\
u^r(0,1) & u^r(1,1) & \dots \\
\vdots & \vdots &  \\
\end{bmatrix}
=
\begin{bmatrix}
b_0 & b_1 & \dots \\
b_0 & b_1 & \dots \\
\vdots & \vdots & 
\end{bmatrix}
+\beta W
\end{align*}
for some $b_0,b_1,\dots\in\mathbb{R}$ and $\beta\in\mathbb{R}$, where the matrix $W$ is as in the sender's payoff. Since we consider a setting where the receiver's payoff is given by $u^r(x,y)=1$ for $x=y$ and $u^r(x,y)=0$ for $x\neq y$, we obtain $b_0=b_1=\dots=0$, $\beta=1$ and $W=\mathrm{diag}([1,\dots,1]^T)$. This reveals that the problem is essentially equivalent to either a team theoretic setup or a zero-sum setup, depending on the sign of $\alpha$. Then, the result follows from Theorem~\ref{thm:blackwell} and Theorem~\ref{thm:zerosum}.

\subsection{Proof of Theorem~\ref{thm:algorithm}}\label{proof:algorithm}

While the set of feasible posterior combinations satisfying the Bayesian plausibility condition is large, we show that it is sufficient to consider the posteriors in the set $\mathcal{Q}$. To that end, we first show the convexity of the decision regions from the perspective of the receiver. In particular, we prove that the set of posteriors under which a particular action $y$ gives the largest (or equal to) payoff for the receiver is convex. Towards that goal, let us take two posteriors $Q'$ and $Q''$ that satisfy this property. Let $Q=\lambda Q' +(1-\lambda)Q''$ for $\lambda\in(0,1)$. Then, we write
\begin{align*}
&\sum_{x\in\mathcal{X}} u^{r}(x,y)(\lambda Q'(x) +(1-\lambda)Q''(x))\\
&\geq 
\sum_{x\in\mathcal{X}} u^{r}(x,y') (\lambda Q'(x) +(1-\lambda)Q''(x))
\end{align*}
for all $y'\neq y$, where the inequality is established by decomposing the summations and then using the fact that the receiver prefers the action $y$ under $Q'$ and $Q''$. This reveals that the action $y$ yields the largest payoff for the receiver under the convex combination $Q$, as well. 

Then, let us take a posterior $Q$ under which the action $y$ yields the largest payoff. By Krein–Milman Theorem \cite{Rudin1991}, we can express this posterior as a convex combination of the extreme points of the region where the action $y$ leads to the largest payoff for the receiver. The first step of the algorithm constructs a set $\mathcal{Q}$ that consists of such extreme points. In the following, we explain this construction. To begin with, the posteriors in the set $\{e_0,e_1,\dots,e_{M-1}\}$ are indeed extreme points since it is not possible to write them as a convex combination of any other probability distributions. Next, we consider posteriors under which the action $y$ leads to a strictly larger payoff than any other action. More specifically, let us take a posterior $Q$ under which $\sum_{x\in\mathcal{X}} u^r(x,y)Q(x)>\sum_{x\in\mathcal{X}} u^r(x,y')Q(x)$ holds for all $y'\neq y$. In this case, this posterior is either in the interior of the region where the receiver prefers $y$, or is a point at a boundary of the probability simplex. If $Q$ is in the interior, then this posterior is obviously not an extreme point. On the other hand, if $Q$ is at a boundary of the probability simplex, then one can verify that $Q$ can be represented as a convex combination of $e_y$ and the posteriors inducing equal preferences for certain actions including $y$. Therefore, it is concluded that we need to consider equal preferences to obtain the remaining extreme points. More specifically, for each $\mathcal{A}\subseteq \mathcal{X}$ with at least two elements, we obtain a posterior under which the receiver is indifferent between the actions in $\mathcal{A}$. In particular, due to the structure of the receiver's payoff, for a given $\mathcal{A}\subseteq \mathcal{X}$, the receiver prefers the actions in the set $\mathcal{A}$ equally and prefers these actions over any other action under the posterior distribution $Q$ with $Q(x)=1/|\mathcal{A}|$ for all $x\in\mathcal{A}$, and $Q(x)=0$ for all $x\notin\mathcal{A}$.

Next, let us take a posterior $Q$ under which the action $y$ yields the largest payoff for the receiver. The previous analysis reveals that this posterior can be written as a convex combination of certain posteriors in the set $\mathcal{Q}$. Now, we show that using this convex combination instead of $Q$ yields a larger or equal payoff for the sender. First, suppose that the receiver strictly prefers $y$ over any other action under $Q$. Due to the previous discussion, we can write $Q=\sum_{i=0}^{K-1}\lambda_iQ_i$ for some $K$ where $\lambda_i$'s with $0\leq \lambda_i\leq 1$ denote convex combination coefficients and $\{Q_i\}_{i=0}^{K-1}$ consists of $e_y$ and the posteriors in $\mathcal{Q}$ inducing equal receiver preferences for certain actions including $y$. For a posterior under which the receiver is indifferent between certain actions, the receiver picks the action that gives the largest payoff for the sender. This implies that if at least one of the posteriors in the convex combination leads to an action other than $y$, then the sender's payoff improves. More specifically, we obtain
\begin{align*}
u^{s,*}(Q)=\sum_{x\in\mathcal{X}}u^s(x,y)Q(x) \leq \sum_{i=0}^{K-1} \lambda_i u^{s,*}(Q_i),
\end{align*}
where the inequality follows from the fact the action $y$ is among the receiver's preferred actions under each $Q_i$. Next, suppose that the receiver is indifferent between certain actions and $y$ under $Q$. Let $\mathcal{C}$ denote this set of actions. We know that $Q$ is either in the set $\mathcal{Q}$ or can be represented as a convex combination of certain posteriors in the set $\mathcal{Q}$ under which the receiver prefers the actions in $\mathcal{C}$ equally. In the latter case, under a posterior in the convex combination, the receiver is indifferent between all the actions in $\mathcal{D}$, where $\mathcal{D}$ is some action set satisfying $\mathcal{C}\subseteq \mathcal{D}$. For such a posterior, the receiver acts according to the sender's preference among the actions in $\mathcal{D}$. Hence, if the sender prefers the same action for all posteriors in the convex combination, then using the convex combination yields the same payoff. On the other hand, if one posterior in the convex combination leads to a different action, then a higher payoff is obtained by using the convex combination. Therefore, it is seen that there is no need to induce a posterior that is not in the set $\mathcal{Q}$.

In the second step of the algorithm, we take $M$ linearly independent posteriors from the set $\mathcal{Q}$, where these posteriors satisfy the Bayesian plausibility constraint in \eqref{eq:algo_eq3}. In the following, we show that it is sufficient to consider only linearly independent posteriors. Towards that goal, let us take a feasible posterior combination $Q_0,Q_1,\dots,Q_{I-1}\in\mathcal{Q}$. Suppose that this posterior combination is linearly dependent. Let us take a set of positive coefficients $\alpha_0,\alpha_1,\dots,\alpha_{I-1}$ with $\sum_{i=0}^{I-1}\alpha_i=1$ so that $\sum_{i=0}^{I-1}\alpha_i Q_i(x) = \pi_x$ holds for all $x\in\mathcal{X}$. Let $\boldsymbol{\alpha} \triangleq [\alpha_0,\alpha_1,\dots,\alpha_{I-1}]^T$. Since the set $Q_0,Q_1,\dots,Q_{I-1}$ is linearly dependent, it follows that $A=[Q_0,Q_1,\dots,Q_{I-1}]$ has a nonempty null space. Moreover, since $Q_i$'s are probability distribution vectors, a vector in the null space of $A$ must have both positive and negative entries. Let us take an element $\boldsymbol{v}=[v_0,v_1,\dots,v_{I-1}]^T$ in the null space, normalized such that $\sum_{i=0}^{I-1}v_i=1$ holds. By using the vector $\boldsymbol{v}$, we obtain a new set of coefficients for which the Bayesian plausibility constraint is satisfied. More specifically, we let 
\begin{align*}
\boldsymbol{\lambda}  
= \theta \boldsymbol{\alpha}
+ (1-\theta) \boldsymbol{v}
\end{align*}
where $\boldsymbol{\lambda} = [\lambda_0,\lambda_1,\dots,\lambda_{I-1}]^T$, and $\theta\in(0,1)$ is chosen such that $\lambda_i=0$ for some $i\in\{0,1,\dots,I-1\}$ and $\lambda_j\geq 0$ for all $j\neq i$. It is possible to find such a $\theta$ since $\boldsymbol{v}$ has both positive and negative entries. Furthermore, by using a normalized vector $\boldsymbol{v}$, we always get $\sum_{i=0}^{I-1}\lambda_i=1$. This reveals that $\boldsymbol{\lambda}$ is a valid probability distribution vector. Hence, it is possible to satisfy the condition in \eqref{eq:algo_eq3} by using the set of coefficients  $\lambda_0,\lambda_1,\dots,\lambda_{I-1}$ as probabilities corresponding to the posteriors $Q_0,Q_1,\dots,Q_{I-1}$. In addition, $\boldsymbol{\lambda}$ has fewer nonzero entries than the probability distribution vector $\boldsymbol{\alpha}$. This allows us to write 
\begin{align*}
\boldsymbol{\alpha} 
=\theta 
\boldsymbol{\lambda}
+ 
(1-\theta)
\boldsymbol{\gamma},
\end{align*}
where $\boldsymbol{\gamma}=[\gamma_0,\gamma_1,\dots,\gamma_{I-1}]^T$, and $\theta\in(0,1)$ is chosen such that $\gamma_i=0$ for some $i\in\{0,1,\dots,I-1\}$ and $\gamma_j\geq 0$ for all $j\neq i$. By this choice of $\theta$, we get a feasible probability distribution vector $\boldsymbol{\gamma}$. Thus, it is seen that the probability distribution vector $\boldsymbol{\alpha}$ can be expressed as a convex combination of probability distribution vectors $\boldsymbol{\lambda}$ and $\boldsymbol{\gamma}$ that have fewer nonzero terms than $\boldsymbol{\alpha}$. In other words, the initial convex combination vector can be represented as a convex combination of two distinct convex combination vectors that have fewer posteriors in the combination. If the resulting posterior combinations are still linearly dependent, we can apply the same technique repeatedly until each combination involves linearly independent posteriors. As a result, we obtain
\begin{align}
\boldsymbol{\alpha} = \theta_0 \boldsymbol{\lambda}_0+\theta_1 \boldsymbol{\lambda}_1\dots+\theta_n \boldsymbol{\lambda}_n
\label{eq:thm_algo_eq1}
\end{align}
for some $n\geq 2$, where $\boldsymbol{\lambda}_i = [\lambda_i^0,\lambda_i^1,\dots,\lambda_i^{I-1}]^T$, $\sum_{j=0}^{I-1}\lambda_i^jQ_j(x)=\pi_x$ for all $x\in\mathcal{X}$ and $i\in\{0,1\dots,n\}$, $\sum_{i=0}^n\theta_i=1$, and $\theta_i>0$ for all $i\in\{0,1\dots,n\}$. In \eqref{eq:thm_algo_eq1}, for each $\boldsymbol{\lambda}_i$, the set of posteriors corresponding to nonzero entries is a linearly independent set. Notice that $\boldsymbol{\lambda}_i$'s are feasible probability distribution vectors with an expected payoff value of 
\begin{align*}
U_i^s = \sum_{j=0}^{I-1} u^{s,*}(Q_j) \lambda_i^j. 
\end{align*}
If $U_i^s=U_{i'}^s$ holds for all $i\neq i'$, then the expected payoff with the convex combination vector $\boldsymbol{\alpha}$ becomes the same. On the other hand, if there exist $i,i'\in\{0,1,\dots,n\}$ such that $U_i^s\neq U_{i'}^s$ holds, then the expected payoff value with $\boldsymbol{\alpha}$ cannot be optimal. That is, by using $\boldsymbol{\lambda}_i$ with the largest $U_i^s$, we obtain a better payoff than the payoff with $\boldsymbol{\alpha}$. This proves that it is sufficient to consider only the posterior combinations that are linearly independent. Finally, the algorithm compares the expected payoffs in \eqref{eq:algo_eq4} for each feasible and linearly independent posterior combination to find the optimal payoff value.  

\subsection{Proof of Lemma~\ref{lem:fully_revealing_lemma}}\label{proof:fully_revealing_lemma}

Under a fully revealing policy, the receiver takes the action $y=i$ when the source observation is $x=i$ for all $i=0,1,\dots,M-1$. It is possible to convince the receiver to take an action $y=j$ for a source observation $x=i$ via quantization when $\pi_j>\pi_i$ holds. In particular, one can quantize the source values $x=i$ and $x=j$ into a single quantization bin while revealing the rest of the source values. In this case, the receiver takes the action $y=j$ for this bin. This policy outperforms a fully revealing policy if 
\begin{align*}
\pi_i u^s(i,j) + \pi_j u^s(j,j)>\pi_i u^s(i,i) + \pi_j u^s(j,j)
\end{align*}
holds, where the left and the right hand sides correspond to the payoff contributions due to these sources under the described quantization policy and a fully revealing policy, respectively. This implies that a fully revealing policy cannot be optimal when $\pi_j>\pi_i$ and $u^s(i,j) >u^s(i,i)$ are satisfied. On the other hand, if $\pi_i=\pi_j$ is satisfied, the receiver is indifferent between the actions $y=i$ and $y=j$ under a quantization policy mapping the sources $x=i$ and $x=j$ to the same message while revealing the rest of the sources. In this case, the receiver selects the action that the sender prefers. Therefore, such a quantization policy outperforms a full revealing policy if the following holds:
\begin{align*}
&\max\{ \pi_iu^s(i,i)+\pi_ju^s(j,i), \pi_iu^s(i,j)+\pi_ju^s(j,j)\}\\
&> \pi_i u^s(i,i) + \pi_j u^s(j,j).
\end{align*}
Hence, for equal prior sources $x=i$ and $x=j$, a fully revealing policy cannot be the equilibrium solution if at least one of the following conditions is satisfied: $u^s(j,i)>u^s(j,j)$ and $u^s(i,j)>u^s(i,i)$. Then, it is concluded that the conditions in \eqref{eq:lem_fully_reveal_eq1} and \eqref{eq:lem_fully_reveal_eq2} are necessary for the optimality of a fully revealing policy. Finally, we show that the conditions in \eqref{eq:lem_fully_reveal_eq1} and \eqref{eq:lem_fully_reveal_eq2} are sufficient for the optimality of full revelation. We know that the sender cannot convince the receiver to take an action $y=j$ under a source observation of $x=i$ by using a deterministic encoding policy when $\pi_j<\pi_i$ is satisfied. On the other hand, for $\pi_j> \pi_i$, the quantization policy described above leads to $y=j$ under a source observation of $x=i$. Similarly, for $\pi_j= \pi_i$, such a quantization policy leads to either $y=j$ for $x=i$, or $y=i$ for $x=j$. Nevertheless, such a quantization policy does not lead to a higher payoff for the sender under the conditions in \eqref{eq:lem_fully_reveal_eq1} and \eqref{eq:lem_fully_reveal_eq2}. Moreover, any other quantization policy further reduces the sender's payoff by inducing a receiver action that is not equal to the source for more than one source. Therefore, it is optimal for the sender to reveal the source completely under the conditions in \eqref{eq:lem_fully_reveal_eq1} and \eqref{eq:lem_fully_reveal_eq2}.

\subsection{Proof of Theorem~\ref{thm:fully_revealing}}\label{proof:fully_revealing_thm}

Due to the receiver's payoff structure, for any given $i,j\in\mathcal{X}$, a posterior $Q$ with $Q(x)=0$ for all $x\neq i$ and $x\neq j$ induces either $y=i$ or $y=j$ as the best response of the receiver. This implies that we can treat each source pair with the corresponding action pair separately. Then, if pairwise randomization improves the sender's payoff for at least one pair of source values, then one can improve the sender's overall payoff by applying randomization for that pair while fully revealing other source realizations. In the following, we assume that pairwise randomization does not improve the sender's payoff for any source pair and prove that the equilibrium solution involves a sender that completely reveals the source in this case. For $i\in\mathcal{Y}$, let $Q_i$ denote the posterior with $Q_i(i)=1$. For $i,j\in\mathcal{Y}$, let $Q_{i,j}$ denote the critical posterior under which the receiver prefers $y=i$ and $y=j$ equally, i.e., $Q_{i,j}(i)=Q_{i,j}(j)=1/2$. In the following, we show that 
\begin{align}
u^{s,*}(Q_{i,j}) 
\leq (1/2) u^{s,*}(Q_i) 
+ (1/2) u^{s,*}(Q_j)
\label{eq:thm_fully_reveal_eq1}
\end{align}
must hold when the sender does not wish to apply randomization to the source pair $x=i$ and $x=j$. In order to prove \eqref{eq:thm_fully_reveal_eq1}, we assume otherwise and then reach a contradiction. Theorem~\ref{thm:algorithm} states that it is sufficient to consider the posteriors $Q_i$, $Q_j$ and $Q_{i,j}$ when we isolate the source pair $x=i$ and $x=j$. The posteriors $Q_i$ and $Q_j$ are induced with probabilities $\pi_i$ and $\pi_j$ when the sender uses a fully revealing policy. If the inequality in \eqref{eq:thm_fully_reveal_eq1} does not hold, then it is possible to increase the sender's payoff by inducing $Q_{i,j}$ with a certain nonzero probability. In particular, let us consider a family of encoding policies parametrized by $\alpha$ that leads to payoff values of
\begin{align*}
U^s(\alpha)
&= (\pi_i-\alpha/2)u^{s,*}(Q_i) \nonumber \\
&+ (\pi_j-\alpha/2)u^{s,*}(Q_j) + \alpha  u^{s,*}(Q_{i,j}),
\end{align*}
where $0\leq \alpha\leq 1$ is such that $\pi_i-\alpha /2\geq 0$ and $\pi_j-\alpha /2 \geq 0$ are satisfied. If we take $\alpha=0$, we obtain the payoff value under a fully revealing policy. Since we assume that the inequality in \eqref{eq:thm_fully_reveal_eq1} does not hold, the payoff value strictly increases when we increase $\alpha$. In addition, we obtain the highest payoff by taking $\alpha$ as the largest value satisfying $\pi_i-\alpha/2\geq 0$, $\pi_j-\alpha/2\geq 0$ and $0\leq \alpha\leq 1$. We also know that for the source pair $x=i$ and $x=j$, a noninformative policy cannot outperform a fully revealing policy since we assume that the equilibrium solution under a deterministic policy restriction is fully revealing. This ensures that the resulting policy with the largest $\alpha$ cannot be equivalent to a noninformative policy. Therefore, by applying randomization, we obtain a payoff value that is higher than the fully revealing payoff under the assumption that \eqref{eq:thm_fully_reveal_eq1} does not hold. This is a contradiction to our assumption that pairwise randomization does not improve the sender's payoff. Hence, it follows that \eqref{eq:thm_fully_reveal_eq1} must hold. Moreover, since $i$ and $j$ are arbitrary, it follows that for each source pair, there is no need to induce the posterior under which the receiver is indifferent between the corresponding pair of actions. 

Next, we prove that for each action combination with more than two elements, the sender does wish to induce the critical posterior under which the receiver prefers all the actions in the combination equally. From \eqref{eq:thm_fully_reveal_eq1}, it follows that $u^s(i,j)\leq u^s(i,i)$ and $u^s(j,i)\leq u^s(j,j)$ must hold for all $i,j\in\mathcal{Y}$ satisfying $i\neq j$. Let us take an action combination $\mathcal{A}\subseteq \mathcal{Y}$. Let $Q$ denote the critical posterior under which the receiver is indifferent between the actions in $\mathcal{A}$, i.e., $Q(x)=1/|\mathcal{A}|$ for all $x\in\mathcal{A}$ and $Q(x)=0$ for all $x\notin\mathcal{A}$. Then, we can write
\begin{align*}
u^{s,*}(Q)=\max_{y\in\mathcal{A}}\left\{\sum_{x\in\mathcal{X}} u^s(x,y)Q(x) \right\} 
\leq 
\sum_{x\in\mathcal{X}} u^s(x,x) Q(x),
\end{align*}
where the inequality follows from the fact that $u^s(x,y)\leq u^s(x,x)$ holds for all $x,y\in\mathcal{Y}$, and the right hand side corresponds to the payoff if $[1,0,\dots,0]^T$, $\dots$, $[0,\dots,0,1]^T$ are induced instead of $Q$. Since $\mathcal{A}\subseteq \mathcal{Y}$ is arbitrary, it follows that it is optimal for the sender to induce only the posteriors with a single nonzero element. This implies that the equilibrium solution involves a fully revealing sender.

\subsection{Proof of Lemma~\ref{lem:noninf_lemma}}\label{proof:noninf_lemma}

Under a noninformative policy, the receiver takes the action $y=0$ for all sources. If the sender maps a source observation of $x=i$ for $i\in\{1,2,\dots,M-1\}$ to a distinct message, then the receiver takes the action $y=i$ for this source. Then, if $u^s(i,i)>u^s(i,0)$ holds, a quantization policy that maps only the source $x=i$ to a distinct message outperforms a noninformative policy. Therefore, the condition in \eqref{eq:noninf_lemma_eq1} is necessary for the optimality of a noninformative policy. In addition, the receiver takes the action $y=i$ for source observations of $x=i$ and $x=j$ with $i,j\in \{1,2,\dots,M-1\}$ and $i<j$ when the sender maps only these source realizations to the same message. Thus, a quantization policy that conveys whether the source belongs to the set $\{i,j\}$ outperforms a noninformative policy if $\pi_iu^s(i,i)+\pi_ju^s(j,i)>\pi_iu^s(i,0)+\pi_ju^s(j,0)$ is satisfied. Hence, the optimality of a noninformative policy requires \eqref{eq:noninf_lemma_eq2}. Next, we show that the inequalities in \eqref{eq:noninf_lemma_eq1} and \eqref{eq:noninf_lemma_eq2} are sufficient conditions. We know that the sender may convince the receiver to take the action $y=i$ under a source observation of $x=j$ by using a deterministic policy only when $i\leq j$ holds. This can be achieved by using a quantization policy described above. Nevertheless, under \eqref{eq:noninf_lemma_eq1} and \eqref{eq:noninf_lemma_eq2}, such a policy does not improve the sender's payoff compared to the noninformative payoff. In addition, the inequalities in \eqref{eq:noninf_lemma_eq1} and \eqref{eq:noninf_lemma_eq2} can be applied iteratively for any other quantization policy to conclude that a noninformative policy is optimal.

\subsection{Proof of Theorem~\ref{thm:noninf}}\label{proof:noninf_thm}

We show that if the condition in \eqref{eq:noninf_thm_cond} is satisfied for a critical posterior inducing equal receiver preferences, then it is possible to improve the sender's payoff by applying randomization. Let $\mathcal{A}\subseteq\mathcal{Y}$ denote a set of actions with cardinality $n\geq 2$. The critical posterior distribution $Q$ under which the receiver is indifferent between the actions in $\mathcal{A}$ is given by $Q(x)=1/n$ for all $x\in\mathcal{A}$, and $Q(x)=0$ for all $x\notin\mathcal{A}$. Suppose that the sender prefers $y=i$ under the posterior distribution $Q$ for some $i\in\mathcal{A}$ with $i\neq 0$. A noninformative policy is equivalent to inducing $\pi=[\pi_0,\pi_1,\dots,\pi_{M-1}]^T$ as the only posterior distribution at the receiver. Such a noninformative policy yields a payoff value of $u^{s,*}(\pi)=\sum_{x\in\mathcal{X}}\pi_x u^s(x,0)$. Consider an alternative encoding policy that leads to the following posterior distributions at the receiver: $Q$ and $Q' = (\pi-\alpha Q)/(1-\alpha)$ with $P(Q)=\alpha$ and $P(Q')=1-\alpha$, where $\alpha>0$ is small to to ensure that $Q'$ is a valid probability distribution, and that the receiver decides $y=0$ under $Q'$. Then, we compare the payoffs in the following: 
\begin{align}
&u^{s,*}(\pi) 
=\sum_{x\in\mathcal{X}} (\alpha Q(x)+(1-\alpha)Q'(x)) u^s(x,0) \nonumber \\
&<\alpha \sum_{x\in\mathcal{X}} u^s(x,i) Q(x) + 
(1-\alpha) \sum_{x\in\mathcal{X}}Q'(x)) u^s(x,0) \nonumber \\
&= \alpha u^{s,*}(Q) + (1-\alpha)u^{s,*}(Q'),
\label{eq:noninf_thm_eq1}
\end{align}
where the inequality follows from \eqref{eq:noninf_thm_cond}, and the last equality is due to the fact that the receiver selects $y=i$ and $y=0$ under $Q$ and $Q'$, respectively. From \eqref{eq:noninf_thm_eq1}, it is seen that a higher payoff is obtained by assigning a nonzero probability to a critical posterior inducing equal receiver preferences for certain actions. This reveals that the sender's payoff improves via randomization. 

Next, we show that if the inequalities in \eqref{eq:noninf_thm_cond} are not satisfied for all critical posteriors, then the equilibrium solution involves a noninformative encoder. In particular, for any critical posterior $Q$ inducing equal receiver preferences for certain actions, we have that   
\begin{align}
\sum_{x\in\mathcal{X}}u^s(x,0)Q(x) \geq  u^{s,*}(Q).
\label{eq:noninf_thm_eq2}
\end{align}
We note that if the sender induces a critical posterior under which the receiver is indifferent between a set of actions including $y=0$, then the receiver decides $y=0$ for this posterior due to \eqref{eq:noninf_thm_eq2}. This implies that in order to see whether randomization improves the sender's payoff, it is sufficient to consider the critical posteriors under which the receiver is indifferent between a set of actions that does not contain $y=0$. In the following, we suppose that the sender induces such posteriors with nonzero probabilities and then show that the payoff in this case cannot outperform the noninformative payoff. In particular, let $\{Q_i\}_{i=0}^n$ denote a set of critical posteriors that are induced with a nonzero probability for a given encoding policy. Such an encoding policy must satisfy
\begin{align}
\pi = \sum_{i=0}^n \alpha_i Q_i + \sum_{i=1}^{M-1} \beta_i e_i + \gamma Q,
\label{eq:noninf_thm_eq3}
\end{align}
where $e_i$'s denote posteriors with $e_i(i)=1$, $\alpha_i\triangleq P(Q_i)$, $\beta_i\triangleq P(e_i)$, $\gamma\triangleq P(Q)$, and $Q$ is a posterior so that the convex combination of the induced posteriors is equal to the prior. We know that the largest entry in $\pi$ is $\pi_0$. In addition, $Q_i(0)=0$ holds for all $i\in\{0,1,\dots,n\}$ and $e_i(0)=0$ holds for all $i\in\{1,2,\dots,M-1\}$. These imply that $Q(0)>Q(i)$ is satisfied for all $i\in\{1,2,\dots,M-1\}$. Therefore, the receiver decides $y=0$ under the posterior distribution $Q$. In the following, we iteratively combine $Q_i$'s with $Q$ to obtain a larger or equal payoff. More specifically, in the first step, instead of inducing $Q_0$ and $Q$ with probabilities $\alpha_0$ and $\gamma$, a posterior of $(\alpha_0 Q_0 +\gamma Q)/(\alpha_0+\gamma)$ is 
induced with a probability of $(\alpha_0+\gamma)$. This does not reduce the sender's payoff since 
\begin{align*}
&(\alpha_0+\gamma) u^{s,*}((\alpha_0 Q_0+\gamma Q)/(\alpha_0+\gamma))
\\
&\leq \alpha_0 u^{s,*}(Q_0) 
+\gamma u^{s,*}(Q)
\end{align*}
holds due to the facts that \eqref{eq:noninf_thm_eq2} is satisfied for $Q_0$ and that the receiver still decides $y=0$ under $(\alpha_0 Q_0 +\gamma Q)/(\alpha_0+\gamma)$. We can repeat this procedure to obtain an improved payoff value of 
\begin{align*}
\sum_{i=1}^{M-1}\beta_iu^{s,*}(e_i) + \gamma' u^{s,*}(Q') 
\end{align*}
where $Q'$ is the resulting posterior after combining $Q_i$'s and $Q$ in \eqref{eq:noninf_thm_eq3}, and $\gamma'\triangleq P(Q')$. Finally, we can iteratively combine $e_i$'s with $Q'$ to further increase the payoff value. More specifically, in the first step, we use $(e_1+Q')/(\beta_1+\gamma')$ with probability $(\beta_1+\gamma')$ instead of $e_1$ and $Q'$ with probabilities $\beta_1$ and $\gamma'$. Then, one can show that this does not reduce the sender's payoff since $u^s(i,0)\geq u^s(i,i)$ holds due to Lemma~\ref{lem:noninf_lemma}. We can repeat this procedure for all $e_i$ with a nonzero probability to obtain a larger or equal payoff. As a result, we obtain a policy that is equivalent to a noninformative policy. Hence, if the inequalities in \eqref{eq:noninf_thm_cond} are not satisfied, then the equilibrium solution involves a noninformative encoder. 

\subsection{Proof of Lemma~\ref{lem:quantized}}\label{proof:quantized_lemma}

We can apply Lemma~\ref{lem:noninf_lemma} for each quantization bin to obtain the conditions in (i) and (ii). If the conditions in (iii) are not met, we can change the bin of the source $x$ from $B_i$ to $B_j$ to improve the sender's payoff. In particular, since such a change does not affect the receiver's decision for sources other than $x$, the sender's payoff improves by modifying the bins under the condition of $u^s(x,x_i)<u^s(x,x_j)$. Moreover, we can construct a separate bin from sources $x$ and $y$ to increase the sender's payoff if the corresponding condition in (iv) is not met. Next, for a pair of source values $x_i$ and $x_j$ with $x_i<x_j$, the sender may convince the receiver to take the action $x_i$ under a source observation of $x_j$. However, since $x_j$ is the receiver's decision for the bin $B_j$, it is required to take the decisions for the other sources in $B_j$ into account. In (v), one searches for a modified quantization policy that only affects the decisions for the sources in $\{x_i\}\cup B_j$. Therefore, if the search leads to a quantization policy other than with bins $\{x_i\}$ and $B_j$, then the quantization policy with bins $B_0,B_1,\dots,B_{n-1}$ cannot be optimal. All in all, we have shown that the conditions in (i)-(v) are necessary for the optimality of a quantization policy with bins $B_0,B_1,\dots,B_{n-1}$. Finally, it remains to show that these conditions are sufficient, as well. These conditions ensure that the sender does not wish to induce a different feasible action for any source with minimal effect on other decisions. This implies that any other quantization policy further reduces the sender's payoff compared to the individual payoffs considered in the analysis above.

\subsection{Proof of Theorem~\ref{thm:quantized}}\label{proof:quantized_thm}

To begin with, we can apply Theorem~\ref{thm:noninf} to obtain the result in (i). In particular, a quantization policy can be improved by applying randomization to sources within a bin if the corresponding condition in \eqref{eq:quantized_thm_eq1} is satisfied. Furthermore, due to Theorem~\ref{thm:noninf}, these conditions are necessary and sufficient for the optimality of a randomized policy when the randomization is restricted to combinations of sources inside a bin. However, one can still improve the sender's payoff by applying randomization to source combinations from multiple bins. In the following, we consider such randomized policies under the assumption that the conditions in (i) are not satisfied.

Next, let us take a posterior distribution $Q$ that induces equal receiver preferences for actions corresponding to sources from multiple bins. The aim is to see whether inducing such a posterior improves the sender's payoff compared to the quantization policy that is optimal among deterministic policies. This quantization policy induces a single posterior for each bin. In order to induce $Q$ with a nonzero probability, one needs to decrease the probabilities of posteriors for the optimal quantization policy in such a way that the Bayesian plausibility constraint is still satisfied. In the following, we write $Q$ as a linear combination of certain posteriors to increase its probability while satisfying the Bayesian plausibility constraint. Towards that goal, let us consider the bin $B_i$ for some $i\in\{0,1,\dots,n-1\}$. We can express the posterior of the bin $B_i$ for the optimal quantization policy as a convex combination of critical posteriors inside the region that the receiver prefers $x_i$. Since the conditions in (i) are not satisfied, the sender prefers $x_i$ under these posteriors. This implies that the receiver's decision becomes $y=x_i$ for these posteriors. The matrix $R\triangleq [Q_0,Q_1,\dots,Q_{M-1}]$ in \eqref{eq:quantized_thm_eq3} contains a subset of such posteriors. In particular, only $Q_i$'s with $Q_i(x_i)=1$ and the posteriors under which the receiver is indifferent between $x_i$ and some other action in $B_i$ are considered. In fact, any critical posterior inducing equal receiver preferences for more than two actions including $x_i$ within the bin $B_i$ can be written as a linear combination of certain posteriors in $R$ that lead to $x_i$ as the best response of the receiver. This implies that it is sufficient to consider the posteriors in $R$ while increasing the probability of $Q$. Since $R$ is full rank by construction, it is possible to take its inverse. Hence, we compute \eqref{eq:quantized_thm_eq3} to express $Q$ as $Q=\sum_{i=0}^{M-1}\alpha_i Q_i$. Then, if the corresponding condition in \eqref{eq:quantized_thm_eq2} is satisfied, it is possible to increase the sender's payoff via randomization by making the following modifications to the optimal quantization policy. We first write the posterior of each bin as a convex combination of critical posteriors inside the region that the receiver prefers $x_i$, where we take nonzero probabilities in the combination. Then, the probability of the posterior $Q_i$ in $R$ is increased (decreased) with probability $\alpha_i \epsilon$ for positive (negative) $\alpha_i$, where $\epsilon>0$ is such that the resulting probabilities are nonnegative. This allows us to assign a probability of $\epsilon$ for the posterior $Q$. Such a randomized policy outperforms the optimal quantization policy since \eqref{eq:quantized_thm_eq2} is satisfied. On the other hand, if the conditions in \eqref{eq:quantized_thm_eq2} are not satisfied for all such posteriors, it is not possible to increase the sender's payoff via randomization. To see this, we note that $\alpha_i$'s in \eqref{eq:quantized_thm_eq2} are unique since $R$ is full rank. Hence, the only way to increase the probability of the posterior $Q$, a positive factor of $\alpha_i$'s in \eqref{eq:quantized_thm_eq3} must be used while increasing/decreasing the corresponding probabilities of $Q_i$'s. However, this does not improve the sender's payoff since \eqref{eq:quantized_thm_eq2} does not hold.

\bibliographystyle{IEEEbib}
\bibliography{BayPers}


\end{document}